\newif\ifcomments
\newif\ifanonymous
\newif\ifshort
\newif\iflncs
  \newtheorem{theorem}{Theorem}%
  \newtheorem{definition}[theorem]{Definition}
  \newtheorem{lemma}[theorem]{Lemma}
  \newtheorem{corollary}[theorem]{Corollary}
  \newtheorem{proposition}[theorem]{Proposition}
\setlist[description]{noitemsep}
\setlist[enumerate]{noitemsep}
\setlist[itemize]{noitemsep}
\NewDocumentCommand{\whiten}{ m }
{
  \int_step_function:nnnN {1}{1}{#1} \white_text:n
}
\NewDocumentCommand{ \varul }{ D<>{5} O{0.2ex} O{0.1ex} +m } {%
  \begingroup
  \setul{#2}{#3}%
  \def\SOUL@uleverysyllable{%
    \setbox0=\hbox{\the\SOUL@syllable}%
    \ifdim\dp0>\z@
    \SOUL@ulunderline{\phantom{\the\SOUL@syllable}}%
    \whiten{#1}%
    \llap{%
      \the\SOUL@syllable
      \SOUL@setkern\SOUL@charkern
    }%
    \else
    \SOUL@ulunderline{%
      \the\SOUL@syllable
      \SOUL@setkern\SOUL@charkern
    }%
    \fi}%
  \ul{#4}%
  \endgroup
}
\newcommand{\E}{\mathop{\mathbb{E}}}
\newcommand{\paren}[1]{\left(#1\right)}
\newcommand{\mbracket}[1]{\mleft[#1\mright]}
\newcommand{\mbrace}[1]{\mleft\{#1\mright\}}
\newcommand{\abs}[1]{\left|#1\right|}
\newcommand{\UCRS}{\ket{\mathsf{UCRS}}}
\newcommand{\Commit}{\mathsf{Commit}}
\newcommand{\Reveal}{\mathsf{Reveal}}
  \newcommand{\luowen}[1]{{\color{blue}Luowen: #1}}
  \newcommand{\luowen}[1]{}
\begin{document}

\title{Unconditionally secure quantum commitments with preprocessing\thanks{This is the conference proceeding version. For a full-er version that contains the simplified proof of \Cref{thm:nonuniform}, please refer to the author's dissertation.}}
\iflncs\else
  \newcommand{\email}[1]{\href{mailto:#1}{\texttt{#1}}}
\fi
\ifshort
	\pagenumbering{gobble}
\fi
\iflncs
  \author{Luowen Qian\orcidID{0000-0002-1112-8822}}
  \institute{Boston University \& NTT Research \\\href{mailto:luowen@qcry.pt}{\email{luowen@qcry.pt}}}
\else
	\author{Luowen Qian\footnote{Boston University \& NTT Research. Email: \email{luowen@qcry.pt}}}
\fi
\date{}
\maketitle

\begin{abstract}
	We demonstrate how to build computationally secure commitment schemes with the aid of quantum auxiliary inputs \emph{without unproven complexity assumptions}.
	Furthermore, the quantum auxiliary input can be either sampled in uniform exponential time or prepared in at most doubly exponential time, without relying on an external trusted third party.
	Classically, this remains impossible without first proving $\mathsf{P} \neq \mathsf{NP}$.
\end{abstract}

\ifshort
	\newpage
	\pagenumbering{arabic}
\fi

\section{Introduction}

It has been known since 1990 \cite{IL89-essential,Gol90-ci} that almost all interesting classical cryptographic tasks requires computational security, and furthermore, hardness assumptions that are at least as strong as the existence of one-way functions.
Thus realizing these cryptographic tasks unconditionally faces the barrier of ``$\mathsf P \stackrel{?}{=} \mathsf{NP}$'', a problem that has undergone intense studies by complexity theorists.
These cryptographic tasks in particular include constructing a commitment scheme, the feasibility of which is equivalent to the existence of one-way functions.

Auxiliary-input cryptography, studied since the 1990s \cite{OW93-zk}, is a non-uniform version of cryptography where every party in the protocol gets access to a copy of some public information that might not be efficiently preparable.
This is not to be confused with non-uniform security, which is the default security notion where the adversaries, in addition to running in polynomial time, get some advice at the beginning from an inefficient preprocessing phase or perhaps some residual information from another protocol execution.
Following the same proofs, the same barrier of ``$\mathsf P \stackrel{?}{=} \mathsf{NP}$'' still applies in this more relaxed setting.

Given this difficulty, it is natural to consider constructing \emph{quantum} commitments instead.
Recent works have demonstrated that quantum commitments play a similar central role as classical ones, in terms of its tight connections to both quantum cryptography at large \cite{Yan22-commitment,BCQ23-efi,BEMPQY23-uhlmann} and quantum complexity \cite{BEMPQY23-uhlmann}.
While commitments statistically (or information theoretically) secure against both parties are impossible even quantumly \cite{May97-commitment,LC97-commitment}, recent works have demonstrated that computationally secure ones are possible under complexity assumptions \cite{BCQ23-efi,BEMPQY23-uhlmann,Bra23-black} that are evidently milder than $\mathsf{P} \neq \mathsf{NP}$ \cite{Kre21-pseudorandom,AQY22-prs,MY22-commitment,KQST22-algorithmica,LMW23-unitarysynth}.
This line of works suggests that achieving computationally secure quantum cryptography might not be susceptible to the same barriers that apply to classical cryptography.

Nonetheless, it is still reasonable to speculate that any reasonable quantum computational cryptography could face some other barriers.
Indeed, all prior quantum computational cryptography still starts by assuming some hardness assumptions, even though they may be weaker than what is needed classically as was shown above.

In this work, we consider a natural quantum non-uniform cryptography notion called quantum auxiliary input, meaning that every party receives copies of the same quantum pure state as input.%
\footnote{
  There are two different variants of auxiliary-input security considered in the literature: in this work, we focus on the strong variant where the adversary's success probability is small for all but finitely many auxiliary inputs; however, there is also a weaker variant where we only require the adversary's success probability to be small for infinitely many auxiliary inputs \cite{OW93-zk}.
  Note that classical auxiliary input \emph{weakly} secure quantum commitments can be built assuming $\mathsf{QCZK} \not\subseteq \mathsf{BQP}$ \cite{BCQ23-efi}, classical auxiliary-input (strong) quantum commitment can be built assuming $\mathsf{QCZK} \not\subseteq \mathsf{QMA}$, and finally standard quantum commitments without auxiliary input can be built assuming various other unproven assumptions.
}
Therefore, in some sense, this quantum auxiliary information can be thought of as a ``cryptographic magic state'', extremely similar to magic states that occur in quantum fault tolerance, where a piece of quantum state can augment the computational power of a less powerful circuit family.

Our main theorem, perhaps surprisingly, constructs quantum auxiliary-input commitments without unproven complexity assumptions.
\begin{theorem}
    \label{thm:main}
    There exists a computationally-hiding statistically-binding non-interactive quantum commitment scheme with quantum auxiliary input.
    Furthermore, the quantum auxiliary input has an exponential-size classical description that can be sampled uniformly in exponential time.
\end{theorem}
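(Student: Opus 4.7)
The plan is to (i) define a candidate non-interactive commitment scheme whose commit algorithm takes an auxiliary state $\ket{\psi_n}$ (together with the bit $b$ and private randomness) and whose two commitment densities $\rho_0(\ket{\psi_n}), \rho_1(\ket{\psi_n})$ are statistically far apart for \emph{every} $\ket{\psi_n}$; (ii) use Haar-measure concentration together with a counting argument over polynomial-size circuits to establish the existence of a specific $\ket{\psi_n}$ that makes the scheme computationally hiding against every non-uniform polynomial-size adversary receiving polynomially many copies of $\ket{\psi_n}$; and (iii) verify that such a $\ket{\psi_n}$ can be sampled uniformly in exponential time with an exponential-size classical description.

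For step (i), I would design the scheme so that binding holds pointwise in $\ket{\psi_n}$---for instance by embedding $b$ into a classical-quantum tag whose support is disjoint across $b$, or by enforcing orthogonality of the two commitment subspaces as a structural property of the algorithm. Crucially, the scheme should also satisfy the averaging property $\mathbb{E}_{\ket{\psi_n} \sim \text{Haar}}[\rho_0(\ket{\psi_n}) \otimes \ket{\psi_n}\!\bra{\psi_n}^{\otimes t}] = \mathbb{E}_{\ket{\psi_n} \sim \text{Haar}}[\rho_1(\ket{\psi_n}) \otimes \ket{\psi_n}\!\bra{\psi_n}^{\otimes t}]$ for every polynomial $t$; then no adversary has any distinguishing advantage \emph{in expectation} over the Haar sample.

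For step (ii), I would fix a super-polynomial size bound $s(n) = n^{\log n}$ and apply L{\'e}vy's lemma on the unit sphere of $\mathbb{C}^{2^n}$: for any fixed distinguisher $C$ of size $\leq s(n)$ using $t \leq s(n)$ copies of $\ket{\psi_n}$, the distinguishing advantage $f_C(\ket{\psi_n})$ is a function with Lipschitz constant $O(t)$ and (by the averaging property of step (i)) mean zero. L{\'e}vy concentration then yields $\Pr_{\ket{\psi_n}}[|f_C(\ket{\psi_n})| > 1/s(n)] \leq \exp(-\Omega(2^n/\operatorname{poly}(s(n))))$, which beats the total count $2^{O(s(n)\log s(n))}$ of size-$s(n)$ circuits by a union bound for all sufficiently large $n$. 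Hence an overwhelmingly large fraction of Haar-random $\ket{\psi_n}$ simultaneously fool every size-$s(n)$ distinguisher, and in particular every fixed polynomial-size adversary on all sufficiently large~$n$; this gives strong auxiliary-input hiding after fixing one such $\ket{\psi_n}$ per $n$.

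Step (iii) follows by drawing $2^n$ independent complex Gaussians to inverse-exponential precision and renormalizing, giving an exponential-size classical description of an (approximately) Haar-random $\ket{\psi_n}$ in time $2^{O(n)}$; by step (ii) this sample is a good auxiliary input with overwhelming probability. The main technical obstacle I foresee is the joint requirement in step (i) of pointwise statistical binding \emph{and} Haar-averaged hiding, which pull in opposite directions: the construction has to isolate the rigid binding structure from the masking role of $\ket{\psi_n}$, presumably by letting the commit algorithm place $b$ into a structurally orthogonalized classical-quantum register while using $\ket{\psi_n}$ solely as a quantum pad whose fine structure is washed out under Haar averaging.
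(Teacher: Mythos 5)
Your high-level plan---a probabilistic existence argument for a good auxiliary state plus exponential-time sampling---is indeed the paper's plan, and your observation that the construction should make binding pointwise in the state while hiding holds only on average is the right dichotomy. But step~(ii), the heart of the argument, has a fatal gap, and it is exactly the gap the paper flags in a footnote. You propose a union bound over the $2^{O(s(n)\log s(n))}$ size-$s(n)$ circuits. That count enumerates \emph{classical circuit descriptions}, but the adversary in the hiding game is a non-uniform \emph{quantum} adversary and therefore carries arbitrary quantum advice. To union-bound over a polynomial number of advice qubits one must pass to an $\varepsilon$-net of the advice Hilbert space, whose dimension is already $2^{\operatorname{poly}(n)}$, so the net has size $\exp\bigl(2^{\operatorname{poly}(n)}\bigr)$---doubly exponential. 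L\'evy's lemma only gives singly exponential concentration $\exp\bigl(-\Omega(2^n/\operatorname{poly})\bigr)$ on the $n$-qubit sphere, so the union bound does not close. Your argument therefore only establishes hiding against non-uniform adversaries with \emph{classical} advice, which is strictly weaker than what the theorem asserts.

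The paper sidesteps this by not doing its own concentration-plus-net argument at all. Instead it invokes a nontrivial prior result on quantum non-uniform (quantum-advice) security of random functions in the QROM, quantified as $O(\sqrt[3]{S/N})$ distinguishing advantage for size-$S$ circuits with arbitrary quantum advice depending on $H$ (\Cref{thm:nonuniform}, from \cite{CGLQ20-tradeoffs,Liu23-advice}). This is what makes the auxiliary state $\ket{M_\lambda}=2^{-5\lambda/2}\sum_x\ket{x}\ket{H(x)}$ for a random sparse $H:\{0,1\}^{5\lambda}\to\{0,1\}^{6\lambda}$ work rather than a Haar-random state: the hiding game for the commitment reduces directly to the PRG game for $H$. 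Your step~(i) obstacle is then resolved cleanly: binding holds pointwise because $H$ has sparse image (fidelity at most $2^{-\lambda}$ between the commitment to $0$ and the maximally mixed state, by Cauchy--Schwarz), and a SWAP test plus $\lambda$-fold parallel repetition finishes it. Your steps~(i) and~(iii) are morally sound, but you cannot complete step~(ii) without either the QROM advice-security theorem the paper relies on (or the matrix-Chernoff variant mentioned in the related-work discussion), and a Haar-random $\ket{\psi_n}$ does not obviously inherit such a bound.
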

Our proof builds upon the prior work of Chailloux, Kerenidis, and Rosgen~\cite{CKR16-commitments} who established the same result assuming the unproven complexity separation that $\mathsf{QIP} \not\subseteq \mathsf{QMA}$.
However, our construction uses a sparse pseudorandom ensemble constructed with a probabilistic method instead of the assumed hardness of a $\mathsf{QIP}$-complete problem.

Despite their similarities, the fundamental nature of this theorem is different since it is unconditional.
As a result, there are a few interpretations of this theorem.
\begin{enumerate}
  \item This is the \emph{first} demonstration of a \varul{useful cryptographic task with unconditional inherently-computational security} (quantum or classical), meaning that such tasks are impossible with statistical security.
  Cryptographers are often trained to instinctively assume that computational security protocols rely on hardness assumptions, and conversely unconditionally secure protocols are statistically secure.
  This work, however, reveals that such presumptions are unwarranted, especially in the quantum setting.
  \item \Cref{thm:main} also reveals that we should be more cautious about the speculation that quantum computational cryptography still requires making hardness assumptions, and perhaps even reassess its validity and \varul{investigate the existence of barriers}.
  \item This is also arguably a conditional demonstration of \varul{quantum computational advantage through cryptography}.
  As discussed, a classical analogue of \Cref{thm:main} would still imply $\mathsf{P} \neq \mathsf{NP}$ \cite{IL89-essential}.
  Conversely, this shows that quantum auxiliary-input commitments are feasible even if $\mathsf{P} = \mathsf{NP}$ and classical (auxiliary-input) commitments are impossible.
  This is different from traditional conditional quantum computational advantages (such as factoring) where the condition is classical easiness rather than classical hardness of certain problems.
\end{enumerate}

In fact, it is possible to strengthen this classical impossibility above further.
As we show in \Cref{thm:classical-impossibility}, this is true even if we consider the model where all parties get access to a single (possibly inefficient) sampling oracle, and each sample is private to the requested party.
Here we allow access to randomized auxiliary information in order to level the playing field a bit for classical protocols, since a pure quantum state can also implement such randomized samples by taking an appropriate superposition over bitstrings.
This model is even stronger than having access to classical auxiliary input, since the oracle could just output that fixed string with probability 1.
(Considering every party having access to the same randomized advice can be simply replaced with a fixed classical advice by an averaging argument.)
Thus in some sense, our result is even stronger than Raz's result of $\mathsf{QIP/qpoly} = \mathsf{IP/rpoly} = \mathsf{ALL}$ \cite{Raz05-qpcp} as the power of the quantum advice does not come from the advice being inherently randomized and this randomness being private to each party.

\paragraph{Removing trust.}
One might be skeptical of the security since na\"ively it appears that the parties need to assume they can trust the quantum auxiliary input given to them.
One way to address this concern is to ask the skeptical party to simply inspect the classical description of the magic state and verify that the commitment built with it is secure.
This takes at most a doubly exponential time since we can cast it as a $\mathsf{QMA}$ problem (\Cref{prop:qma-security}).
Alternatively, we could also ask them to find the lexicographically first magic state that works, which also takes at most doubly exponential time.

In \Cref{sec:trustless}, we show how to achieve computationally secure commitments without any trusted auxiliary inputs through a more efficient preprocessing.
Specifically, in the scheme we only need to ask both the committer and the receiver to perform a uniform exponential-time preprocessing phase.

\paragraph{Removing inefficiency.}
Complementary to that, we also show how to do commitments in a \emph{completely efficient} setting with a weakly trusted setup in \Cref{sec:crs}.
Specifically, we adapt our construction into a trusted setup model where a trusted third party efficiently generates a few copies of the quantum auxiliary information for every party in the protocol to use before the protocol begins.
However, unlike \Cref{thm:main}, here the pure state distributed to every party is not deterministic.
Furthermore, the scheme could in fact be statistically secure against all parties if the number of copies distributed is restricted.

In comparison, a stronger (in terms of trust) setup model is the secret parameter model \cite{Ps05-nizk}, where we need to trust the setup to sample two correlated secret strings for each party.
While statistically secure classical non-interactive commitments are possible in the secret parameter model, our model appears to be meaningfully weaker than the secret parameter model.
In particular, we can again invoke our classical impossibility (\Cref{thm:classical-impossibility}) to argue that statistically secure commitments remain impossible in a classical sampling analogue of our model, since an unbounded adversary could always solve any $\mathsf{NP}$ problem.

\paragraph{Polynomially bounded adversaries are physical, probably.}
A possible concern regarding the claimed ``unconditional'' nature of this result is that the security relies on the ``assumption'' that the adversary is polynomially bounded during the execution of the protocol.
We address that concern by noting that it is possible to reduce this assumption to physical assumptions.
On one hand, physical assumptions and more generally modeling assumptions (the mapping between real world and mathematics) are unavoidable in any form of provably secure cryptography.
On the other hand, this suggests a new win-win philosophy, since either we can have secure cryptography or we are able to discover exciting new physics.

More specifically, there is most likely a fundamental physical limit to the density of quantum information before collapsing into a black hole given by the Bekenstein bound (see \cite{Llo00-limits} for an exposition of this).
Since any malicious computation must be done in polynomial time, and thus space by the no superluminal signaling principle, it follows from extended Church--Turing thesis (for quantum information tasks) that any polynomial-time physical computation can be described by a polynomial-size quantum circuit.
Therefore, we can force adversary to be polynomially bounded by simply limiting the protocol execution time.
Even if the adversary could somehow leverage black holes to perform useful computations, we could still monitor the energy density nearby to make sure that this does not happen.
We leave further materializing this idea to future work.

\paragraph{Additional applications.}
We note that it is possible to use this commitment scheme to further instantiate oblivious transfer (OT) and secure multiparty computations (MPC), as we further detail in \Cref{sec:simulation} (see also \cite{BCQ23-efi}).
This is in contrast with Kent's statistically secure relativistic commitment scheme \cite{Kent99-commitment,Kent12-commitment} since (statistical) OT and MPC are known to be impossible even in that quantum relativistic model \cite{Rudolph02-laws,Colbeck06-mpc}.
(Another comparison is that our model also does not impose any strict constraints on the physical location of the parties: they simply need to be a polynomial distance away from each other so that the polynomial timing constraints can be satisfied.)

For a more concrete example for MPC, consider the classical Yao's Millionaires' Problem \cite{Yao82-mpc}, except that now it's a Trillionaires' Problem!
This means that two trillionaires want to figure out who is richer without revealing anything else.
Also since they are trillionaires, their entire families and businesses are somehow also on the line, so if cheating is detected then chaos would ensue.
Furthermore, they also have the world's best cryptanalyst to break any computational hardness assumption should it be necessary and possible, and they are willing to use exponential-time preprocessing as a small sacrifice.

Secure multiparty computations with preprocessing is \emph{the perfect} solution for this problem!
They first spend exponential time to set up and during the protocol execution, they ensure that the other party finish in time and no cheating occurs.
So if nothing bad occurs, then both of them can be satisfied knowing that their secrets are safe.
Additionally, using certified everlasting transfer \cite{BK23-certifieddel}, they can achieve everlasting security by having a third party referee, who is trusted to be uninterested in spending exponential resources recovering the input, to certifiably delete the remaining information.

\paragraph{Open problems.}

\ifshort
By scaling everything down exponentially, we can also view our scheme as a sublinear cryptographic scheme.
In particular, we give a commitment scheme of preprocessing phase of time $T$ and execution phase of time $\Theta(\log T)$, with security $\tilde O(\sqrt[3]{S/T})$ against any online adversary of size $S$.
Can we have a provably secure commitment scheme with a better preprocessing-time-vs-security trade-off?
For example, if we shave off the cube root from \Cref{thm:nonuniform} then the security can be improved to $\tilde O(S/T)$.
\fi

One undesirable feature of our scheme is that there is no good way to get additional copies of the magic state.
In the trusted setup model, it is possible to redo the setup every once in a while to preserve statistical security and efficiency, but otherwise generating new copies still takes exponential time, although it is possible to build dedicated hardware to pipeline this process.
A natural question is whether we can construct quantum cryptography with quantum auxiliary information that is efficiently clonable or reusable.
While there are families of quantum states that is efficiently clonable but cannot be prepared in uniform polynomial time relative to a quantum oracle \cite{BZ23-teleportation}, even having a standard model candidate is open since the quantum oracle there is a cloning oracle.
The biggest issue with reusing our schemes is that for each invocation of the commitment scheme, on average $\Theta(\lambda)$ copies of the magic state is transferred to the other party, and there does not appear to be a way to certifiably retrieve these states.
Of course, the ultimate goal would be to construct these without inefficient preprocessing or having trusted setup at all.

\paragraph{Concurrent work.}
\ifanonymous
To keep the submission anonymous, this subsection is omitted in this submission.
Please refer to the full version online instead.
\else
Near the completion of this work, I became aware of a parallel work done by Tomoyuki Morimae, Barak Nehoran, and Takashi Yamakawa~\cite{MNY23-qaicomm}.
In particular, both works have the same constructions of quantum auxiliary-input commitments (\Cref{thm:main}) and statistically secure commitments with a weakly trusted (stateful) setup (\Cref{cor:ucrs-comm} and \Cref{prop:stathiding}).
They have an additional statistical commitment construction with stateless setup, however, the downside is that this can only be secure given that at most a bounded number of copies is generated.
Complementary to the classical impossibility of \Cref{thm:classical-impossibility}, they also give an impossibility of classical auxiliary-input (quantum) commitments in a weak setup model.
They in addition have the application of quantum auxiliary-input zero-knowledge proofs for $\mathsf{NP}$ (with negligible simulation security).
They also point out an observation by Fermi Ma that the cube root security loss of \Cref{thm:nonuniform} can be improved to square root if we instead augment the \cite{GK90-pseudorandom} argument with a matrix Chernoff bound like was done in \cite{LMW23-unitarysynth}.
This square root loss is also tight even against classical algorithms \cite[page 651]{DTT10-attacks}.

Notably, their work point out that the quantum auxiliary-input commitment cannot be immediately used to construct simulation-secure commitments through the \cite{BCKM21-mpc} compiler due to the use of Watrous rewinding in the simulator there.
After the discussions with \cite{MNY23-qaicomm}, I present the resolution of this in \Cref{sec:simulation} by showing how to adapt \cite{BCKM21-mpc} to recover $\varepsilon$-simulation security, which still suffices to recover almost all applications of a standard quantum commitment.
\fi

\ifshort
\section{Non-interactive quantum commitments}
\label{sec:commitment-def}

In this section, we use $x \in S$ as the auxiliary information for a certain set $S$ and implicitly its length $|x|$ as  the security parameter.
Thus if $x$ is a classical unary string, then it is a standard uniform commitment scheme.
In order to prevent degeneracy, we require that $S$ must contain arbitrarily long bitstrings or quantum states: for any integer $n$, there exists $x \in S$ such that $|x| \ge n$.
We refer the readers to \cite{BCQ23-efi} for quantum information and cryptography background.

\begin{definition}
	A non-interactive commitment scheme is a pair of efficient quantum algorithms $\Commit$ and $\Reveal$ where $\Commit(x, b)$ produces a bipartite state $\rho$ over the commitment register $\mathsf C$ and the decommitment register $\mathsf D$, and $\Reveal(x, \rho)$ outputs either the committed bit $b'$ or a rejection symbol $\bot$.
	Furthermore, $\Pr[\Reveal(x, \Commit(x, b)) \neq b]$ is negligible for any $b = 0, 1$ and $x \in S$.

	We say it is computationally (or statistically) hiding if the $\mathsf C$ registers of $\Commit(x, 0)$ and $\Commit(x, 1)$ are computationally (or statistically, respectively) indistinguishable.

	We say it is statistically (or computationally) sum binding if for any state $\rho$ (over $\mathsf C, \mathsf D$ and a private register $\mathsf M$) and any possibly inefficient (or efficient, respectively) unitary $U$ not acting on register $\mathsf C$, $p_0 + p_1 - 1$ is negligible, where $p_b$ is the probability that the receiver accepts bit $b$.
	In particular, $p_0 := \Pr[\Reveal(x, \rho_{\mathsf{CD}}) = 0]$ and $p_1 := \Pr[\Reveal(x, (U\rho)_{\mathsf{CD}}) = 1]$.

	Finally, we say it is in canonical form \cite{Yan22-commitment}, if $\Reveal$ takes the following form:
	\begin{enumerate}
		\item Perform a rank-1 projection on $\mathsf{CD}$ and output $0$ if it succeeds.
		\item Perform another orthogonal rank-1 projection on $\mathsf{CD}$ and output $1$ if it succeeds.
		\item Output $\bot$.
	\end{enumerate}
\end{definition}

Note that here $\rho$ can be arbitrarily inefficient even in the case of computational hiding, which captures the standard non-uniform security notion where the adversary can have potentially a different auxiliary input for every security parameter.

Since statistical sum binding is traditionally unwieldy to use in applications, the work of Ananth, Qian, and Yuen defines an extractor-based binding definition, which on a high level states that for any (non-uniform) malicious committer, there is an extractor that can extract the committed bit from the receiver's view after the commitment phase in an imperceptible way, as long as the receiver does not touch those registers until the reveal phase.

\newcommand{\realexpt}{\mathsf{RealExpt}}
\newcommand{\idealexpt}{\mathsf{IdealExpt}}
\begin{definition}[{\cite[Definition 6]{AQY22-prs}}]
	\label{def:stat:binding}
	We say that a quantum commitment scheme satisfies \emph{receiver-extractable statistical binding} if for any malicious (non-uniform) committer $\mathcal C$, there exists a (possibly inefficient) extractor algorithm ${\cal E}$ such that $\realexpt$ is statistically indistinguishable from $\idealexpt$, where
	\begin{itemize}
		\item $\realexpt$: Execute the commit phase to obtain the joint state $\sigma_{C R}$. Execute the reveal phase to obtain the trit $\mu$ outputted by the receiver. Output the pair $(\tau_{C},\mu)$ where $\tau_{C}$ is the final state of the committer.
		\item $\idealexpt$: Execute the commit phase to obtain the joint state $\sigma_{C R}$.  Apply the extractor $I_C \otimes {\cal E}_R(\sigma)$ to obtain a new joint committer-receiver state $\sigma'_{CR}$ along with the extracted trit $b' \in \{0,1,\bot\}$.
		Starting from $\sigma'$, execute the reveal phase to obtain the trit $\mu$. Let $\tau_{C}$ denote the final state of the committer.
		If $\mu = \bot$ or $\mu = b'$, then output $(\tau_{C},\mu)$.
		Otherwise, output a special symbol $\mathfrak{E}$ (unused in the real experiment) indicating extraction error.
	\end{itemize}
	\end{definition}

For canonical form commitments, it is known that many variants of binding are equivalent, including statistical sum binding and extractable binding \cite{FUYZ22-binding}.
It is unclear how to make commitments constructed in this work into canonical form due to the presence of quantum auxiliary information, so for completeness we show how to extend the equivalence to general non-interactive schemes.

We call a commitment scheme having a projective $\Reveal$ if $\Reveal$ is a projective measurement on the receiver's view.
For example, any post-quantum commitment with a deterministic $\Reveal$ algorithm or any canonical-form commitment has a projective $\Reveal$ but our schemes do not due to the swap test.
Without loss of generality we can always generically make a commitment have a projective $\Reveal$ via Stinespring dilation.
In particular, we need to purify the $\Reveal$ algorithm so that it is a unitary on registers $\mathsf{CD}$ and potentially some auxiliary register $\mathsf{A}$ (which holds $x$ and potentially some zeroes), followed by a complete measurement on a qutrit in the auxiliary register to obtain the output.
Furthermore, we ask the receiver to prepare the auxiliary register $\mathsf A$ immediately after the commit phase.
This change is also imperceptible from the committer's perspective.

We remark that due to a technicality, the equivalence cannot hold without giving the extractor access to $\mathsf A$.
Consider the following unnatural counterexample, a non-interactive (classical) commitment scheme without a projective $\Reveal$ that satisfies sum binding but not extractable binding.
To commit to bit $b$, the committer simply send a mode bit $1$ followed by $b$; and the decommitment message is empty.
The $\Reveal$ algorithm checks that if the mode bit is $1$ then output $b$, and if the mode bit is $0$ then output a random bit.
Intuitively this scheme is binding since no matter which mode the malicious committer uses, he cannot later change the bit in any way since the decommitment message is empty; and indeed it is straightforward to see that it satisfies sum binding.
However, a malicious committer can cause the extractor to fail since without access to the random bit register later used to sample the output of $\Reveal$, it is impossible for the extractor to predict the bit that would be revealed later.

\begin{theorem}
	Statistical sum binding is equivalent to extractable binding for any commitment scheme with a projective $\Reveal$, thus we can build an extractable binding commitment scheme from any statistical sum binding commitment scheme (and vice versa).
\end{theorem}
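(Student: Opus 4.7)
\medskip

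\noindent\textbf{Proof proposal.}
The plan is to prove both directions separately. For the easy direction, \emph{extractable binding $\Rightarrow$ statistical sum binding}, I would fix an arbitrary malicious committer and invoke the guaranteed extractor $\mathcal{E}$. In $\idealexpt$, $\mathcal{E}$ produces a trit $b' \in \{0,1,\bot\}$ after the commit phase, \emph{before} the committer chooses whether to try to unveil $0$ or $1$ via the unitary $U$ on $\mathsf{DM}$. Since the real and ideal experiments are statistically indistinguishable, and since the ideal experiment outputs the error symbol whenever the reveal bit $\mu \in \{0,1\}$ differs from $b'$, we get $p_0 \le \Pr[b'=0] + \negl$ and $p_1 \le \Pr[b'=1] + \negl$, where each probability concerns the corresponding choice of $U$. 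Adding yields $p_0+p_1 \le 1 + \negl$, which is sum binding.

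For the harder direction, \emph{sum binding $\Rightarrow$ extractable binding}, the plan is to adapt the Jordan-decomposition strategy of \cite{FUYZ22-binding} (originally written for canonical-form commitments with rank-$1$ reveal projectors) to our more general projective-$\Reveal$ setting. I would first purify the malicious committer so that after the commit phase the joint state is $\ket{\psi}_{\mathsf{CDM}}$, with the receiver holding $\mathsf{C}$ and freshly preparing $\mathsf{A}=\ket{0}$. Define the positive operators $E_b := \bra{0}_{\mathsf{A}} \Pi_b \ket{0}_{\mathsf{A}}$ on $\mathsf{CD}$. Sum binding, combined with Uhlmann's theorem (used to optimize over purifications $\ket{\psi}_{\mathsf{CDM}}$ of the receiver's view $\sigma_{\mathsf{C}}$ and over the reveal unitaries $U_b$ on $\mathsf{DM}$), then becomes the statement that for every state $\sigma_{\mathsf{C}}$,
\[
P_0(\sigma_{\mathsf{C}}) + P_1(\sigma_{\mathsf{C}}) \le 1 + \negl, \qquad P_b(\sigma_{\mathsf{C}}) := \max_{\rho_{\mathsf{CD}}\,:\,\mathrm{Tr}_{\mathsf{D}}\rho_{\mathsf{CD}} = \sigma_{\mathsf{C}}} \mathrm{Tr}[E_b \rho_{\mathsf{CD}}].
\]

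Next I would construct the extractor. The idea is to replace $\mathsf{D}$ with a simulated ancilla register $\mathsf{D}'$ held by the extractor, initialized to the decommitment state that witnesses $P_b$ (or, more robustly, to a convex combination of witnesses). Concretely, the extractor's measurement on $\mathsf{CA}$ would be a projective measurement $\{M_0, M_1, M_\bot\}$ obtained by (i) reducing $E_0, E_1$ to operators on $\mathsf{C}$ via an optimal choice of $\tau_{\mathsf{D}}$, (ii) applying Jordan's lemma to the supports of these two reduced operators to split $\mathsf{C}$ into invariant $1$- and $2$-dimensional blocks, and (iii) on each block assigning a projector onto the direction that dominates for bit $b$, with the ``confused'' directions absorbed into $M_\bot$. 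Having $\mathsf{A}$ available to the extractor is crucial here so that $M_b$ can faithfully reproduce the behaviour of $\Pi_b$ (this is exactly why the earlier counterexample must be excluded).

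The main obstacle is the simultaneous analysis of correctness and imperceptibility across all Jordan blocks. On each block, the sum-binding bound forces the $F_0$- and $F_1$-dominant directions to be nearly orthogonal, which is what makes the extracted $b'$ agree with the eventual reveal outcome $\mu$ up to $\negl$, and simultaneously makes $M_{b'}$ a ``gentle'' measurement on $\sigma_{\mathsf{C}}$ so that the post-extraction state is $\negl$-close in trace distance to the pre-extraction state and the subsequent reveal phase proceeds as in $\realexpt$. The delicate point is that the optimal decommitment $\tau_{\mathsf{D}}^{*}$ defining the reduced operators on $\mathsf{C}$ can depend on $\sigma_{\mathsf{C}}$, so one must either choose a single $\tau_{\mathsf{D}}$ that is near-optimal for \emph{every} relevant $\sigma_{\mathsf{C}}$ (e.g.\ via an averaging or minimax argument) or else argue blockwise in a way that remains uniform over the committer's choice, and then carefully combine the gentle-measurement errors over blocks via a union bound to obtain an overall $\negl$ disturbance.
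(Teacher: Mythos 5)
Your easy direction (extractable $\Rightarrow$ sum binding) matches the paper's argument exactly: extract first, observe $p_0 + p_1 \le 1$ in the ideal world, transfer by real/ideal indistinguishability.

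For the hard direction, you take a genuinely different route from the paper, but it has a gap that is actually the crux of the whole theorem. The paper does \emph{not} re-derive a Jordan-style analysis for a general projective $\Reveal$. Instead, it reduces to the canonical-form case by a committer-dependent change of scheme: fix the purified post-commit state $\ket{\rho}$, post-select on the two outcomes of the projective $\Reveal$ to get orthogonal pure states $\ket{\rho_0}, \ket{\rho_1}$, and replace $\Reveal$ by the canonical-form $\Reveal'$ whose rank-$1$ projectors are $\ketbra{\rho_0}{\rho_0}$ and $\ketbra{\rho_1}{\rho_1}$. This modified scheme inherits sum binding (any adversary against $\Reveal'$ is also one against $\Reveal$), so the existing result of \cite{FUYZ22-binding} hands us an extractor. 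Gentle measurement plus the fact that $\Reveal$ and $\Reveal'$ agree on $\mathrm{span}\{\ket{\rho_0},\ket{\rho_1}\}$ then transfers the extractor back to the original scheme. This sidesteps exactly the ``delicate point'' you flag — the dependence of the optimal decommitment $\tau_{\mathsf D}$ on $\sigma_{\mathsf C}$, and the blockwise bookkeeping — because the reduction is done once per fixed (purified) adversary, which is all the extractable-binding definition requires.

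Beyond being incomplete, your sketch contains a step that would actually fail. You define $E_b := \bra{0}_{\mathsf A} \Pi_b \ket{0}_{\mathsf A}$, i.e.\ you trace $\mathsf A$ out of the dilated reveal measurement before building the extractor's POVM, and then in step~(i) you further reduce to operators on $\mathsf C$ alone. But the counterexample immediately preceding the theorem (mode bit $0$, $\Reveal$ outputs a fresh coin) shows that any extractor built from operators that have forgotten $\mathsf A$ must fail: there the reduced operators on $\mathsf C$ carry no information about the coin, yet $\Reveal$'s output is determined by $\mathsf A$. You do say later that ``the extractor's measurement on $\mathsf{CA}$'' and that ``having $\mathsf A$ available to the extractor is crucial,'' which is the right instinct, but it is inconsistent with the $E_b$/$\tau_{\mathsf D}$ reduction you base the Jordan decomposition on. Any rescue of the direct Jordan approach would have to keep $\mathsf A$ inside the operators whose supports you decompose, at which point you are essentially redoing \cite{FUYZ22-binding} on $\mathsf{AC}$ rather than $\mathsf C$ — which is where the paper's ``tailored canonical form'' trick saves a lot of work.
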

\begin{proof}
	Extractable binding implies sum binding is straightforward: $p_0 + p_1 \le 1$ in the ideal experiment where the bit is extracted and guaranteed to be correct, thus the indistinguishability between the real and the ideal experiments implies sum binding.

	For the other direction, fix any sender where the overall state before executing $\Reveal$ is $\rho$.
	Without loss of generality we assume $\rho$ is pure by taking the purification into committer's private register.
	Let $\ket{\rho_0}$ be the state post-selected on the event that $\Reveal(\ket{\rho}) = 0$ and $\ket{\rho_1}$ be the state post-selected on the event that $\Reveal(\ket{\rho}) = 1$, or $0$ if such post-selection is not possible.
	$\ket{\rho_0}$ and $\ket{\rho_1}$ are orthogonal by the fact that $\Reveal$ is projective.
	Consider a canonical-form variant of $\Reveal$ where the rank-1 projections are given by $\ketbra{\rho_0}{\rho_0}$ and $\ketbra{\rho_1}{\rho_1}$.
	We claim that this variant is still sum binding since any sum binding adversary succeeding for this variant would also succeed in breaking sum binding against the original $\Reveal$.
	Then the result of Fang, Unruh, Yan, and Zhou~\cite{FUYZ22-binding} (also in \cite[Appendix B]{MY22-commitment}) implies that there is an extractor for this modified scheme, in particular, the extractor performs the optimal distinguishing measurement between these two states on registers $\mathsf{AC}$.
	This extractor also works for the original scheme since the extractor almost perfectly project the state onto either $\ket{\rho_0}$ or $\ket{\rho_1}$ by statistical sum binding and gentle measurement, and the two $\Reveal$ algorithms behave identically in the subspace spanned by these states.
\end{proof}

\fi

\section{Quantum auxiliary-input commitment}

\ifshort\else
We formally define non-interactive commitments and handle its subtleties for later applications in \Cref{sec:commitment-def}.
\fi
To prove our main theorem, we first need the following result on the quantum non-uniform security of pseudorandomness.

\begin{proposition}[{\cite{CGLQ20-tradeoffs,Liu23-advice}}]
	\label{thm:nonuniform}
	For a random function $H: [N] \to [M]$, the best quantum circuit of size $S$ (potentially depending on $H$) can distinguish its output from a random element from $[M]$ with advantage at most $12 \cdot \sqrt[3]{\frac SN}$ (averaged over the choice of $H$).
\end{proposition}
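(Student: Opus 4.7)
My plan is to prove this via the compression / reconstruction paradigm for non-uniform pseudorandomness, in its quantum-advice incarnation from CGLQ20 and Liu23. Assume toward contradiction that the average distinguishing advantage exceeds $\epsilon := 12 \sqrt[3]{S/N}$. The first step is a standard distinguisher-to-predictor reduction: a hybrid over the coordinates of $H$ converts the size-$S$ circuit $C_H$ of distinguishing advantage $\epsilon$ into a prediction circuit $P_H$ of essentially the same size such that, on a random input $x$, $P_H$ outputs a guess $y$ with $\Pr[y = H(x)] \ge 1/M + \Omega(\epsilon/M)$; equivalently, $P_H$ agrees with $H$ on an $\Omega(\epsilon)$-fraction more inputs than uniform guessing. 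An application of Markov to the average-case hypothesis also lets me restrict to a set of ``good'' functions $H$ on which both the distinguishing and the prediction advantage are $\Omega(\epsilon)$.

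The second step is the quantum compression argument. From $P_H$ one builds an encoding of $H$ consisting of (i) the description of the quantum circuit $C_H$, contributing about $S$ qubits of information, plus (ii) a short list that patches the positions on which $P_H$ differs from $H$. A counting / encoding inequality then forces $S$ to be at least some function of $N$, $M$, and the prediction advantage. In the classical Gennaro--Trevisan style argument this produces $\epsilon \lesssim \sqrt{S/N}$; in the quantum setting the exponent degrades to $1/3$ because the quantum advice can only serve as a lossy random access code in the sense of Nayak, and amplifying a single-coordinate advantage into simultaneous prediction over many coordinates loses an extra square root from quantum measurement disturbance. Carefully tracking the constants in the hybrid, the Markov step, and the quantum amplification lemma should deliver the explicit factor of $12$.

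The main obstacle, and the reason one cannot simply port the classical proof, is precisely this quantum amplification/compression step: a short quantum advice cannot be queried multiple times without disturbance, so the usual independence-plus-Chernoff argument does not apply directly, and one instead needs the quantum direct-product bound of CGLQ20 (or, as the paper notes in its discussion of concurrent work, an operator Chernoff bound in the style of LMW23 which would sharpen $1/3$ to $1/2$). I would treat that quantum amplification inequality as a black box and focus the proof on the distinguisher-to-predictor reduction and the counting step, citing CGLQ20 / Liu23 for the hard quantum ingredient.
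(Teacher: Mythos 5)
Your proposal takes a genuinely different route from the paper, and in fact the route you describe is not the one CGLQ20/Liu23 actually carry out, so the black box you rely on does not exist as stated. The paper's proof (following Liu's framework) makes no use of a compression/reconstruction or Gennaro--Trevisan style argument, nor a distinguisher-to-predictor hybrid. Instead it works in the \emph{bit-fixing QROM} ($P$-BF-QROM) presampling framework: one first shows that the PRG distinguishing game, viewed as a game where the adversary gets $P$ free quantum queries followed by postselection, has winning probability at most $\tfrac12 + 4\sqrt{2}\sqrt{(P+T^2)/N}$ (\Cref{lemma:prg-bf}); then one invokes Liu's generic transfer theorem (\Cref{thm:bf2qai}) that any game $\nu$-secure in the $P$-BF-QROM is $\min_{\gamma>0}\{\nu(P/\gamma,T)+\gamma\}$-secure against $(S,T)$ non-uniform quantum advice, with $P=S(T+T_{\mathrm{verify}}+T_{\mathrm{samp}})$. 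For a query-less distinguisher $T=T_{\mathrm{verify}}=0$, $T_{\mathrm{samp}}=1$, so $P=S$, and minimizing $2\sqrt{32S/(\gamma N)}+2\gamma$ at $\gamma=2(S/N)^{1/3}$ gives exactly $12\sqrt[3]{S/N}$. That is: the cube root and the constant $12$ are artifacts of the $\gamma$-tradeoff in the presampling-to-advice lifting, \emph{not} of a quantum measurement-disturbance penalty in a direct-product or random-access-code argument as you hypothesize.

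Concretely, the gaps in your plan are: (i) you cite ``the quantum direct-product bound of CGLQ20'' as the hard ingredient, but CGLQ20/Liu23 do not prove this bound by direct product or by compression -- they use compressed oracles plus presampling, so the lemma you want to black-box is not in those references; (ii) the distinguisher-to-predictor reduction you open with is unnecessary here and would cost you a factor of $M$ in the advantage that the actual proof never pays -- the PRG game is analyzed as a two-message distinguishing game directly; (iii) your explanation of where $1/3$ comes from is not the mechanism in the cited works, so ``tracking the constants'' in your framework will not reproduce $12$. If you do want a compression-style route, that is a genuinely different (and, as far as the cited literature goes, unwritten) proof; you would need to supply the quantum encoding/decoding argument yourself rather than cite it, and it is not clear it yields a cube root rather than something worse.
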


While these works consider the more general case where quantum algorithms that could additionally make queries to $H$, the same proofs also give a polynomial upper bound for standard query-less algorithms.
For completeness, we show this precise bound in \Cref{sec:nonuniform-proof}.

In particular, this implies the existence of a good function for which this is true against size $S - 1$, since we can always use an extra bit to make the bias always have the same sign for all $H$.
In other words, for any $(S - 1)$-size algorithm $A'$, there is an $S$-size algorithm $A$ such that
\[ \abs{\E_H\mbracket{\E_x[A'(H(x))] - \E_y[A'(y)]}} = \E_H\mbracket{\abs{\E_x[A(H(x))] - \E_y[A(y)]}}, \]
where $A$ simply runs $A'$ and XOR its output with the extra advice bit.
With some small inverse exponential security loss, we can also use Markov's inequality to argue that even sampling a function uniformly at random would satisfy this with overwhelming probability.

Instantiating the above with $S = 2^\lambda + 1, N = 2^{5\lambda}$ and $M = 2^{6\lambda}$, we get the following corollary, which generalizes the classical (inefficient) sparse pseudorandom ensemble construction of Goldreich and Krawczyk~\cite{GK90-pseudorandom} to the post-quantum setting\footnote{The na\"ive attempt of quantizing \cite{GK90-pseudorandom} proof cannot seem to handle quantum advice. For readers familiar with that work, a natural strategy is to use union bound over all quantum algorithms/advice using an $\varepsilon$-net, however, this fails since the size of the $\varepsilon$-net is doubly exponential, and thus eliminating the single exponential concentration we get from Hoeffding's bound.}.

\begin{corollary}[Exponentially secure sparse pseudorandom ensemble]
	There exists a pseudorandom ensemble $H: \{0, 1\}^{5\lambda} \to \{0, 1\}^{6\lambda}$ against all $2^\lambda$-size quantum circuits $A$ with quantum auxiliary information, whose security $\abs{\E_x[A(H(x))] - \E_y[A(y)]} \le 2^{-\lambda}$ for all $\lambda \ge 11$.
\end{corollary}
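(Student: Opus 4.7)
The plan is to instantiate Proposition~\ref{thm:nonuniform} with $N = 2^{5\lambda}$, $M = 2^{6\lambda}$, and $S = 2^\lambda + 1$, combine it with the sign-fixing remark immediately following the proposition, and then apply Markov's inequality to turn the average-case bound into a single good $H$.

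In detail, first I would set $g(H) := \sup_{A' \text{ of size } 2^\lambda} \bigl|\E_x[A'(H(x))] - \E_y[A'(y)]\bigr|$, where the supremum is over all quantum circuits of size $2^\lambda$ equipped with arbitrary quantum advice. The sign-fixing remark says that for each $H$ one can append a single classical advice bit to the optimal $A'$ and XOR it into the output, producing a size-$(2^\lambda+1)$ circuit $A$ whose \emph{signed} advantage equals $g(H)$. Consequently
\[
\E_H[g(H)] \;\le\; \E_H\Bigl[\,\sup_{A \text{ of size } 2^\lambda+1} \bigl(\E_x[A(H(x))] - \E_y[A(y)]\bigr)\Bigr] \;\le\; 12\sqrt[3]{\frac{2^\lambda+1}{2^{5\lambda}}}
\]
by Proposition~\ref{thm:nonuniform}.

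Next I would apply Markov's inequality to the nonnegative random variable $g(H)$:
\[
\Pr_H\!\bigl[g(H) > 2^{-\lambda}\bigr] \;\le\; 2^\lambda \cdot 12\sqrt[3]{\frac{2^\lambda+1}{2^{5\lambda}}} \;\le\; 12 \cdot 2^{(1-\lambda)/3},
\]
using $2^\lambda+1 \le 2^{\lambda+1}$. To conclude, I need this probability to be strictly less than $1$, which forces $\lambda > 1 + 3\log_2 12$; a direct numerical check confirms that $\lambda \ge 11$ suffices (the loose bound just above already gives roughly $0.94$ at $\lambda=11$, and using $2^\lambda+1$ rather than $2^{\lambda+1}$ only helps). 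So at least one $H$ in the support satisfies $g(H) \le 2^{-\lambda}$, which is precisely the claimed pseudorandomness guarantee against all size-$2^\lambda$ distinguishers with quantum advice.

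The main subtlety, rather than an obstacle, is the order of quantifiers: Proposition~\ref{thm:nonuniform} already allows the distinguisher to depend on $H$, so taking the supremum inside the expectation is legitimate, and the sign-fixing trick is exactly what lets the absolute value defining $g(H)$ pass through the expectation at the cost of a single gate. Once that point is clear, the rest is arithmetic; I would present it as a short paragraph verifying the constants for $\lambda \ge 11$.
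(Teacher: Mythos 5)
Your proposal matches the paper's (implicit) argument exactly: instantiate \Cref{thm:nonuniform} with $S = 2^\lambda + 1$, $N = 2^{5\lambda}$, $M = 2^{6\lambda}$, use the sign-fixing advice bit to pass the absolute value through the expectation over $H$, and apply Markov's inequality to extract a single good $H$. One small numerical slip in the parenthetical: the \emph{loose} bound $12 \cdot 2^{(1-\lambda)/3}$ evaluates to roughly $1.19$ at $\lambda = 11$, not $0.94$; the $0.94$ figure comes from the \emph{tight} bound $2^\lambda \cdot 12\sqrt[3]{(2^\lambda+1)/2^{5\lambda}}$, so the retreat to $2^\lambda+1$ is actually necessary (not just a further improvement) to make $\lambda = 11$ work.
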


\begin{proof}[{Proof of \Cref{thm:main}}]
	Let $\{H(x)\}_{x \in \{0, 1\}^{5\lambda}}$ be the exponentially secure sparse pseudorandom ensemble as above.
	We first specify the quantum auxiliary input
	\[
	\ket{M_\lambda} := 2^{-5\lambda/2} \sum_{x \in \{0, 1\}^{5\lambda}} \ket{x} \otimes \ket{H(x)},
	\]
	a pure state of $11\lambda$ qubits that can be prepared using a circuit of size $\tilde O(2^{5\lambda})$.

	We now specify the protocol.
	The sender, to commit to 0, simply sends the second half of $\ket{M_\lambda}$; and to commit to 1, sends the second half of a maximally entangled state
	\[\ket{\Psi_\lambda} := 2^{-3\lambda} \sum_{y \in \{0, 1\}^{6\lambda}} \ket y \otimes \ket y;\]
	to later decommit, the rest of the state is sent.
	The receiver, upon receiving the entire state and the bit $b$ can efficiently test by performing a SWAP test between the received state and the correct state of either $\ket{M_\lambda}$ or $\ket{\Psi_\lambda}$.
	Finally, a $\lambda$-fold parallel repetition is applied to this construction, meaning that the committer commits to the same bit $\lambda$ times in parallel, and the receiver checks that all commitment decommits to the same bit.

	For binding, we note that our construction is identical to that of \cite[Section 4]{CKR16-commitments} except for the choice of two pure states used for the commitments.
	In our case, since the two reduced density matrices on the commitment register are classical, so the fidelity between them is
	\begin{equation}
		\label{eq:binding-fidelity}
		\paren{\sum_y\sqrt{\frac{\Pr[H(x) = y]}{2^{6\lambda}}}}^2 \le 2^{-\lambda}
	\end{equation}
	by applying Cauchy--Schwarz on $y$'s in the image of $H$.
	The rest follows the same proof as \cite[Proposition 4.4]{CKR16-commitments}, thus we get that our commitment after taking $\lambda$-fold parallel repetition is statistical sum binding.

	For hiding, the two reduced density matrices given to the hiding adversary exactly corresponds to the security game of the pseudorandom ensemble.
	Finally taking a parallel repetition also preserves hiding by a standard hybrid argument.
\end{proof}

We remark that this construction can be straightforwardly ``derandomized'' using a (post-quantum) pseudorandom generator, and thus eliminating the inefficiency at the cost of assuming the security of the pseudorandom generator.

In fact, if we are only aiming for a commitment scheme with quantum auxiliary input and do not require the classical description to be computable in exponential time, any non-trivial computationally indistinguishable pair of quantum states suffices.
The construction is essentially the same except that the receiver does a SWAP test with the corresponding pure state for decommitments to either bits.

\paragraph{Classical impossibility.}

For convenience, we focus on non-interactive statistically-binding commitments for the classical case. %
This is a fair comparison since such a commitment with classical auxiliary information does exist assuming existence of one-way functions, by applying averaging argument to the first message of Naor commitment \cite{Naor91-commitment}.
We leave improving this impossibility to future work.
The main insight is that having sample access to the oracle suffices to break the security.

\begin{theorem}
	\label{thm:classical-impossibility}
	If there exists a computationally-hiding statistically-binding non-interactive classical commitment scheme where all parties share access to a sampling oracle, then $\mathsf{NP} \not\subseteq \mathsf{P/poly}$.
\end{theorem}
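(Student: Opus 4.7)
The plan is to assume $\mathsf{NP} \subseteq \mathsf{P/poly}$ toward contradiction and design a non-uniform polynomial-time malicious receiver that breaks computational hiding of any such scheme. The key observation is that, once the receiver has queried the sampling oracle, its private sample $r_R$ is fixed, and the only dependence of $\Reveal$ on $r_R$ is through a polynomial-time verification of the decommitment; hence the set of commitments openable to a given bit under a fixed $r_R$ is an $\mathsf{NP}$ set whose witness is the decommitment.

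First I would introduce the language
\[
L := \{(c, r_R, b) : \exists\, d \text{ such that } \Reveal(r_R, c, d) \text{ accepts bit } b\},
\]
which lies in $\mathsf{NP}$ and so by hypothesis admits a polynomial-size circuit family deciding it. The hiding adversary then proceeds as follows: query the sampling oracle once to obtain its sample $r_R$; receive the commitment $c$; use the $\mathsf{P/poly}$ circuits for $L$ to compute $a_b := \mathbf{1}[(c, r_R, b) \in L]$ for each $b \in \{0,1\}$; if exactly one of $a_0, a_1$ equals $1$, output that bit, and otherwise output a uniformly random guess.

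To show this breaks hiding, I would combine correctness of the scheme with statistical binding. If the honest committer commits to $b^*$ using its private sample $r_C$, correctness guarantees that the honest decommitment is accepted by the receiver holding an independent sample $r_R$ with overwhelming probability, so $a_{b^*} = 1$ except with negligible probability. Statistical binding, applied with the honest committer viewed as a special (efficient) case of a malicious committer, guarantees that no decommitment causes $\Reveal(r_R, c, \cdot)$ to accept $1 - b^*$ except with negligible probability over the fresh $r_R$, so $a_{1-b^*} = 0$ except with negligible probability. Together these force the adversary to output $b^*$ with overwhelming probability, violating computational hiding and yielding the desired contradiction.

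The main obstacle I expect is a quantifier subtlety in how binding is formulated in this sampling-oracle model: one needs it to rule out, for every (possibly inefficient) committer strategy that may depend on its own oracle sample, the event that the resulting commitment admits two accepting decommitments with respect to a freshly and independently drawn $r_R$. This is the natural adaptation of statistical sum-binding to the private-sample model and it is exactly what the $\mathsf{NP}$-based adversary needs; the cleanest write-up would state this binding property explicitly and verify that honestly generated commitments fall inside its scope before invoking the reduction above.
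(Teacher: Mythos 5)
Your approach is the right one in spirit---use the $\mathsf{NP}$ oracle to test openability---but your single-sample attack does not go through, and the ``quantifier subtlety'' you flag at the end is not a formalization nit: it is a genuine gap, and the binding notion you propose to paper over it is strictly stronger than what statistical (sum-)binding actually gives in this model.

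Concretely, in the private-sample model the committer never sees $r_R$, so the correct binding guarantee is: for every commitment $\tau$ and every \emph{fixed} decommitment $s_0$ to the wrong bit, $\Pr_{r}[\Reveal(s_0, 0, \tau, r)]$ is negligible (this is exactly sum-binding, since the honest opening to the committed bit succeeds with overwhelming probability). Your attack needs the much stronger statement $\Pr_{r}[\exists s_0 : \Reveal(s_0, 0, \tau, r)]$ negligible, where $s_0$ is allowed to depend on $r$. The former does not imply the latter: a na\"ive union bound over the $2^{\ell}$ candidate decommitments gives $2^{\ell}\cdot\mathsf{negl}$, which is vacuous. Indeed one can cook up a scheme that is sum-binding but where $\Reveal$ has a spurious acceptance branch of the form ``accept decommitment $d_0$ to bit $0$ if $d_0 = r$''; each fixed $d_0$ then succeeds only with probability $2^{-\ell}$, yet for \emph{every} $r$ there exists an accepting $d_0$, so your predictor would compute $a_0 = a_1 = 1$ and resort to a coin flip, giving no hiding advantage.

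The paper closes exactly this gap by having the malicious receiver draw $\ell + 1$ \emph{independent} samples $r_0, \dots, r_\ell$ and ask the $\mathsf{NP}$ oracle whether there exists a single $s_0$ with $\Reveal(s_0, 0, \tau, r_i) = \top$ for \emph{all} $i$. For any fixed $s_0$ the per-sample acceptance probability is negligible (eventually $\le 1/2$), so across $\ell+1$ independent samples the joint probability is $\le 2^{-\ell-1}$; now the union bound over the $2^{\ell}$ choices of $s_0$ gives at most $1/2$, while the honest committed bit still survives all $\ell+1$ tests by completeness. This amplify-then-union-bound step is the load-bearing idea you are missing; without it, the existential in your $\mathsf{NP}$ language $L$ cannot be controlled by the binding guarantee that the model actually provides.
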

\begin{proof}
	For simplicity, we consider that committer and receiver each gets a private sample $s, r$ respectively from the sampling oracle; to commit to bit $b$, the committer computes a deterministic function $\Commit(s, b) = \tau$ and sends $\tau$ to the receiver; to decommit, the committer sends $s, b$ and the receiver computes a deterministic function $\Reveal(s, b, \tau, r) \in \{\top, \bot\}$ indicating accept or reject the revealed bit $b$.
	This is without loss of generality since we can always have the oracle give multiple samples per query and pad the oracle with uniform random bits so that the only source of randomness is from the oracle.

	Assume $\mathsf{NP} \subseteq \mathsf{P/poly}$, we show how to efficiently break the hiding.
	Let each sample be $\ell$ bits long.
	The malicious receiver, on input $\tau$ and $(\ell + 1)$ independent samples $\vec r := r_0, ..., r_\ell$ from the sampling oracle, %
		predicts the commitment is to $0$ if there exists $s_0$ such that $\Reveal(s_0, 0, \tau, r_i) = \top$ for all $i$, and predicts to $1$ otherwise.
	This is also an $\mathsf{NP}$ language since $\Reveal$ is efficient and $\ell$ is polynomial.
	Then in the case when the committer commits to $0$, by completeness with overwhelming probability over $\tau$, $\Pr_r[\Reveal(s_0, 0, \tau, r)]$ is negligibly close to 1, so this receiver always predicts $0$ except with negligible probability.
	On the other hand, when the committer commits to $1$, by statistical binding, with overwhelming probability over $\tau$, for every $s_0$ we have that $\Pr_r[R(s_0, 0, \tau, r)]$ is negligible, and by independence of samples, $\Pr_{\vec r}[\forall i : \Reveal(s_0, 0, \tau, r_i)] \le 2^{-\ell - 1}$ for all sufficiently large $\lambda$, thus by union bound, $\Pr_{\vec r}[\exists s_0 \forall i : \Reveal(s_0, 0, \tau, r_i)] \le \frac12$.
	This gives a hiding adversary with advantage at least $\frac12 - \mathsf{negl}$.
	\end{proof}

Not accounting for implementing the $\mathsf{NP}$ oracle, the malicious receiver we construct in this proof uses roughly quadratic space compared to the honest parties.
Curiously, this matches the quadratic space lower bound proven for bit commitments in the bounded storage model \cite[Section 6]{GZ19-bsm} so this suggests that taking multiple samples is likely required for such attacks in general.

\section{Commitment in the unclonable common random state model}
\label{sec:crs}

The common random string (CRS) model is a commonly considered relaxation of the standard trustless model where the only trust in the setup is that a classical string is uniformly sampled and then published.
This model was first proposed in the context of non-interactive zero knowledge (NIZK) \cite{BFM88-nizk} since interesting NIZK is impossible in the trustless model.

In this work, we introduce a quantum analogue of the CRS model that we call the unclonable common random state ($\UCRS$) model.
In the $\UCRS$ model, the only trust in the setup is that a random pure state is drawn from a state distribution, and then many copies of that pure state are made available to all parties.
Furthermore, there should be a way to efficiently generate any polynomial (but a priori unbounded) number of the common state.
We emphasize that unclonability only indicates a lack of the cloning functionality of the common state but the scheme's security does not rely on the common state being unclonable --- in fact in this model, any malicious party is allowed to inefficiently process the classical description of the random state before the protocol begins.

Going back to the commitment construction of \Cref{thm:main}, we note that the magic state is simply a uniform superposition query on a function $H$, and even using a random function $H$ is secure except with inverse exponential probability.
Therefore, that construction is indeed also secure in the $\UCRS$ model, so the only missing piece of the puzzle is to show efficient sampling of the magic state.
This is tricky since the classical description requires an exponential size and there are a doubly exponential number of possible states to sample from.

Nevertheless, we show that there is an efficient way to statefully sample these states.
To do this, we invoke Zhandry's compressed oracle technique \cite{Zha19-record}.

\begin{lemma}[{\cite{Zha19-record}}]
	There exists a stateful simulation oracle $\mathsf{CStO}$ that perfectly simulate any number of (quantum) queries to a random function $H: \{0, 1\}^n \to \{0, 1\}^m$ for any $n, m$.
	Furthermore, the $t$-th query can be processed in time polynomial in $mt$, and the state (also called the database) after $t$ queries consists of $(m + n + 1) \cdot t$ qubits.%
\end{lemma}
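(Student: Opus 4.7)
The plan is to construct $\mathsf{CStO}$ via Zhandry's compressed oracle technique. First I would purify a uniformly random $H : \{0,1\}^n \to \{0,1\}^m$ by introducing an oracle register consisting of $2^n$ blocks of $m$ qubits, where block $x$ is initialized to the uniform superposition $\frac{1}{\sqrt{2^m}} \sum_y \ket{y}$. A standard query $\ket{x, y} \mapsto \ket{x, y \oplus H(x)}$ is then implemented as the unitary that XORs block $x$ of the oracle register into $y$. Tracing out the oracle register gives exactly the distribution induced by a uniformly random $H$, so this perfectly simulates a random oracle, but with exponential state.

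The compression step is to apply Hadamards to the entire oracle register, moving it to the Fourier basis. In this basis the initial $\ket{+^m}$ of each block becomes $\ket{0^m}$, and an XOR query becomes a phase-kickback operation controlled by the query register. The key structural invariant is that after $t$ queries, at most $t$ blocks can fail to be in $\ket{0^m}$. I would exploit this by storing only the non-$\ket{0^m}$ blocks sparsely: the database is a list of at most $t$ slots, each slot holding an $n$-bit input label $x_i$, an $m$-bit Fourier-basis output $y_i$, and a single flag bit distinguishing ``occupied'' from ``empty''. This accounts for exactly $(n + m + 1) \cdot t$ qubits, with empty slots implicitly representing $\ket{0^m}$.

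To process the $t$-th query $\ket{x, y}$, the simulator coherently (i) searches the database for a slot whose flag is set and whose input label equals $x$; (ii) if no such slot exists, it inserts $x$ into an empty slot in a uniform superposition over $y_i$ and flips the flag; (iii) applies the Fourier-basis query operation to that slot's $y_i$ register, controlled on $y$; (iv) checks whether the updated $y_i$ register is $\ket{0^m}$ and, if so, uncomputes the insertion to restore sparsity. Each of these steps reduces to sweeping through the at most $t$ existing slots with $O(n+m)$-qubit comparators and controlled operations, yielding runtime $\mathrm{poly}(mt)$ for the $t$-th query.

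The main obstacle is verifying that this compressed simulator is \emph{perfectly} indistinguishable from the true purified oracle. I would formalize this by defining a partial isometry $\mathsf{Comp}$ from the compressed database Hilbert space into the exponential-dimensional Fourier-basis oracle register, whose image is the subspace of states supported on at most $t$ non-$\ket{0^m}$ blocks. One then checks, case by case on whether $x$ already appears in the database or not, that $\mathsf{Comp}$ intertwines the compressed query unitary with the uncompressed one. Since the distinguisher never touches the oracle register, its reduced state is identical in both simulations, which gives perfect simulation of random-oracle queries as claimed.
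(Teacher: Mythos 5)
The paper gives no proof of this lemma; it is quoted directly from Zhandry's compressed-oracle paper \cite{Zha19-record}. Your sketch faithfully reconstructs the argument in that reference --- purification of the random oracle, passing to the Fourier basis, the at-most-$t$-nonzero-blocks sparsity invariant after $t$ queries, the $(n+m+1)\cdot t$-qubit compressed database, and the intertwining partial isometry $\mathsf{Comp}$ --- so it is essentially the same approach as the cited source. One small bookkeeping slip worth flagging: having set up the Fourier picture (where each untouched block sits at $\ket{0^m}$), step (ii) should initialize a freshly-allocated slot's $y_i$ register to $\ket{0^m}$ rather than to the computational-basis uniform superposition $\ket{+^m}$; as written steps (ii) and (iv) mix the two bases, whereas Zhandry avoids this by sandwiching the XOR step between two applications of the self-inverse local unitary $\mathsf{StdDecomp}_x$, which performs both the decompression and the re-compression/pruning in one move.
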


Since the simulation is perfect, the commitment constructed in \Cref{thm:main}, after replacing $H$ with a truly random function, still works.
The simulation is also efficient since both $m$ and $t$ are polynomial in $\lambda$.
Thus we arrive at the following corollary.

\begin{corollary}
	\label{cor:ucrs-comm}
	There exists a computationally-hiding statistically-binding non-interactive quantum commitment scheme in the $\UCRS$ model.
\end{corollary}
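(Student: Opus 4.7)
The plan is to instantiate the scheme from \Cref{thm:main} in the $\UCRS$ model by replacing the pseudorandomly-chosen function $H$ with a truly uniformly random function, and then arguing that (i) the resulting scheme inherits hiding and binding and (ii) the corresponding magic state can nevertheless be sampled efficiently and consistently across many copies. For the security side, I would first observe that the proof of \Cref{thm:main} used $H$ only through its distinguishing advantage against bounded-size circuits; by \Cref{thm:nonuniform}, a uniformly random $H: \{0,1\}^{5\lambda} \to \{0,1\}^{6\lambda}$ is $2^{-\lambda}$-secure against $2^\lambda$-size circuits with overwhelming probability over the choice of $H$ (via Markov's inequality applied to the bound $12 \sqrt[3]{S/N}$). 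The binding computation in \Cref{eq:binding-fidelity} depends only on the sparsity of $H$'s image relative to $\{0,1\}^{6\lambda}$, which is preserved for a uniformly random $H$ except with inverse exponential probability. Hence, conditioned on the randomly drawn $H$, the scheme from \Cref{thm:main} continues to satisfy computational hiding and statistical binding.

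Next, I would address the sampling of $\ket{M_\lambda}$. The naive approach of sampling $H$ once and then preparing the superposition $2^{-5\lambda/2}\sum_x \ket{x}\ket{H(x)}$ requires exponential time in $\lambda$, which is incompatible with the efficiency requirement of the $\UCRS$ model. The key idea is to instead maintain a stateful setup that uses Zhandry's compressed oracle $\mathsf{CStO}$ to represent a lazily-sampled uniformly random $H$. Each time a copy of the magic state is requested by a party, the setup produces a fresh uniform superposition register $2^{-5\lambda/2}\sum_x \ket{x}\ket{0^{6\lambda}}$ and applies a single $\mathsf{CStO}$ query to fill in the second register, obtaining a state that is perfectly indistinguishable from $\ket{M_\lambda}$ for the same underlying $H$.

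The main step is then to verify that this lazy simulation yields copies that are indistinguishable from copies produced with respect to a single random function $H$ sampled once at the start. I would invoke the perfectness guarantee of the compressed oracle lemma: any poly-query interaction with $\mathsf{CStO}$ is distributed identically to the same interaction with a uniformly random $H$. Since every party (honest or malicious) ultimately interacts with at most polynomially many copies and at most polynomial-size computations, their joint view is statistically identical to one obtained by first fixing a random $H$ and then preparing $\mathrm{poly}(\lambda)$ copies of $\ket{M_\lambda}$. Combining this with the per-$H$ security from the previous paragraph completes hiding and binding in the $\UCRS$ model.

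The main obstacle I anticipate is purely conceptual rather than technical: one has to justify that the compressed-oracle state is a legitimate $\UCRS$, in particular that it evolves only in response to legitimate copy requests and that inefficient malicious processing of its classical description before the protocol begins does not give the adversary more than what the worst-case choice of $H$ would have given. Both concerns are resolved by the perfectness of $\mathsf{CStO}$ and by the fact that the security argument holds for the overwhelming majority of functions $H$; the latter is precisely what the averaging form of \Cref{thm:nonuniform} provides. After noting these points, the corollary follows directly by plugging the $\mathsf{CStO}$-based setup into the scheme of \Cref{thm:main}.
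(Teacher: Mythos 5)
Your proposal takes essentially the same approach as the paper: replace the pseudorandomly-chosen $H$ with a truly random function (secure with overwhelming probability by Markov on \Cref{thm:nonuniform}), and use Zhandry's compressed oracle $\mathsf{CStO}$ as the stateful setup so that each requested copy of the magic state is produced by one $\mathsf{CStO}$ query, with perfectness of the simulation carrying the argument through. One small imprecision: you say binding ``is preserved for a uniformly random $H$ except with inverse exponential probability,'' but in fact the fidelity bound of \Cref{eq:binding-fidelity} holds for \emph{every} $H$ unconditionally, since the image of any $H: \{0,1\}^{5\lambda}\to\{0,1\}^{6\lambda}$ has at most $2^{5\lambda}$ elements and Cauchy--Schwarz gives $\le 2^{-\lambda}$ regardless of the choice of function; the ``overwhelming probability'' qualifier is needed only for hiding.
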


Interestingly, the hiding of the commitment in fact becomes statistical if at most a polynomial number of states are given out to the adversary, and the proof of this is deferred to \Cref{sec:nonuniform-proof}.
Therefore, we get a completely statistically secure commitment if we in addition trust the setup to not generate too many copies of the magic state.

\begin{proposition}
	\label{prop:stathiding}
	Assuming the receiver has at most $P$ copies of the magic state on $H$, then the commitment scheme of \Cref{cor:ucrs-comm} is $\paren{8\sqrt2 \cdot \sqrt{\frac PN}}$-statistically hiding.
\end{proposition}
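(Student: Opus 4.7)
The plan is to bound $\|\bar{\rho}_0 - \bar{\rho}_1\|_1$ directly, where $\bar{\rho}_b := \E_H\bracket{\ketbra{M_H}{M_H}^{\otimes P} \otimes (\rho_b^H)^{\otimes \lambda}}$ with $\rho_0^H := N^{-1}\sum_x \ketbra{H(x)}{H(x)}$ and $\rho_1^H := I/M$, for $N = 2^{5\lambda}$ and $M = 2^{6\lambda}$.

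First I would observe that both $\rho_0^H$ and $\rho_1^H$ are diagonal in the computational basis on the commitment registers, so the adversary's distinguishing POVM can without loss of generality be chosen to be classical on the commitment (of the form $\sum_{\vec y} M^{\mathsf C}_{\vec y} \otimes \ketbra{\vec y}{\vec y}$ with $0 \le M^{\mathsf C}_{\vec y} \le I$, by projecting onto the block-diagonal part). A direct computation then shows that the distinguishing advantage reduces to $\frac12\abs{\E_{H,\vec x}[a_{H(\vec x)}(H)] - \E_{H,\vec y}[a_{\vec y}(H)]}$, where $a_{\vec y}(H) := \operatorname{Tr}(M^{\mathsf C}_{\vec y} \ketbra{M_H}{M_H}^{\otimes P}) \in [0,1]$ and $\vec x, \vec y$ are uniform over $[N]^\lambda, [M]^\lambda$. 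This quantity captures the extent to which the adversary's measurement on the UCRS correlates with $\vec y$ being a ``plausible image'' $H(\vec x)$ rather than fresh uniform.

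Next I would reinterpret each copy of $\ket{M_H} = N^{-1/2}\sum_x \ket{x, H(x)}$ as the state obtained from a single standard oracle query to $H$ on a uniform-superposition input. Thus $P$ UCRS copies correspond to $P$ parallel uniform-input queries, case $0$'s commitment is an additional such query with its input register traced out, and case $1$'s commitment is independent of $H$. Applying a BBBV-style amplitude bound, a uniform-input query places amplitude at most $N^{-1/2}$ on any fixed coordinate $x^* \in [N]$, so after $P$ queries the cumulative amplitude on any coordinate is at most $\sqrt{P/N}$. Since distinguishing whether the commitment register was produced from $\ket{H(x^*)}$ rather than a fresh $\ket{y^*}$ requires the adversary to have ``touched'' $x^*$ in its queries, the advantage is bounded by $O(\sqrt{P/N})$; the explicit $8\sqrt{2}$ comes from a Fuchs--van de Graaf conversion together with the standard constants in the adversary method.

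The main obstacle will be handling the $\lambda$-fold parallel repetition cleanly: the stated bound has no visible $\lambda$-dependence, while a naive hybrid over the $\lambda$ commitment copies would multiply by $\lambda$. I would address this by treating each of the $\lambda$ commitment copies as a ``half-query'' to $H$ (input register traced out), which can be absorbed into the effective $P$ without inflating the final bound; equivalently, one can run the BBBV argument jointly across all $\lambda$ commitment registers at once and exploit the product structure $\prod_i p_{y_i}(H)$ of their conditional distributions to avoid the hybrid loss entirely.
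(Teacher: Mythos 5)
Your approach shares the paper's key observation --- that each copy of $\ket{M_H}$ is exactly the output of one uniform-superposition query to $H$, so a receiver holding $P$ copies is in particular a $P$-query adversary --- but from there you diverge. The paper's actual proof is a one-liner: it notes this reduction and directly invokes \Cref{lemma:prg-bf} (Liu's bound for the PRG game in the $P$-BF-QROM), which immediately gives a winning probability of $\tfrac12 + 4\sqrt{2}\sqrt{P/N}$, i.e.\ trace distance $8\sqrt{2}\sqrt{P/N}$. You are instead attempting to \emph{re-derive} that query bound from scratch, which is a genuinely different and substantially heavier route.

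Two places in that re-derivation do not hold up as written. First, the ``BBBV-style'' step is mis-stated: the BBBV hybrid argument adds per-query amplitudes \emph{linearly}, so for $P$ uniform-superposition queries it gives $P\cdot N^{-1/2} = P/\sqrt{N}$, not $\sqrt{P/N}$. The $\sqrt{P/N}$ scaling is correct here, but it comes from the fact that the $P$ magic states form a \emph{product} state: for $H'$ agreeing with $H$ except at one point, $\langle M_H \vert M_{H'}\rangle = 1 - 1/N$, so $\bigl\lVert \ket{M_H}^{\otimes P} - \ket{M_{H'}}^{\otimes P}\bigr\rVert \le \sqrt{2P/N}$. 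That direct overlap computation (or the compressed-oracle argument underlying \Cref{lemma:prg-bf}) is what you actually need --- it is not BBBV, and you should not expect a hybrid-style accumulation to reproduce it. Second, the constant $8\sqrt{2}$ does not arise from ``Fuchs--van de Graaf plus standard adversary-method constants''; it is literally $2 \times 4\sqrt{2}$ inherited from \Cref{lemma:prg-bf}, and a from-scratch derivation would have to track its own constants rather than cite that provenance. Your discussion of the $\lambda$-fold repetition flags a real tension, but the proposed fix is speculative: a careful joint overlap computation with $\lambda$ swapped points still gives $(1-\lambda/N)^P$, hence a $\sqrt{\lambda}$ factor; the paper sidesteps this by applying \Cref{lemma:prg-bf} to the single-sample PRG game, and its stated constant does not carry a hybrid loss.
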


Another interesting consequence of this is that for our scheme, we cannot trust either party to distribute the magic state.
If the receiver chooses the magic state for the committer, then computational hiding can be trivially broken by picking a bad magic state like the all zero state.
If the committer chooses the magic state for the receiver, then \Cref{prop:stathiding} shows that this scheme is in fact statistically hiding and thus not statistical binding.
In fact, it is not even computationally binding as we show below.

\begin{proposition}
	\label{prop:notcbinding}
	Take the commitment scheme from \Cref{thm:main} but instead have the committer choose $H$ as a random function.
	Then this commitment scheme has computational sum binding error of at least $1 - O(t/\sqrt N)$ even after taking $t$-fold parallel repetition using the same random function.
\end{proposition}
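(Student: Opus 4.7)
The plan is to exhibit an explicit efficient attack by the malicious committer that leverages the full description of the randomly sampled $H$. The committer samples $H \colon \{0, 1\}^{5\lambda} \to \{0, 1\}^{6\lambda}$ uniformly, prepares $\ket{M_\lambda}^{\otimes t}$, and sends the $t$ second halves as the commit message. To reveal bit $0$, the committer forwards the stored first halves unchanged, so every SWAP test accepts and $p_0 = 1$.

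To reveal bit $1$, the committer exploits its classical knowledge of $H$ to implement an efficient unitary $U$ on each dimension-$M$ register realizing the isometry $\ket{x} \mapsto \ket{H(x)}$ (extended arbitrarily to the orthogonal complement of the image inside $\{0, 1\}^{6\lambda}$), and applies $U^{\otimes t}$ to the first halves before sending. The resulting joint state has nontrivial inner product with $\ket{\Psi_\lambda}^{\otimes t}$: because the same $H$ is used across all $t$ copies, the action of $U^{\otimes t}$ is naturally coherent across copies rather than acting as $t$ independent attempts, so the per-copy $\sqrt{N/M}$ overlap does not compound multiplicatively.

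Combining these gives $p_0 + p_1 \ge 2 - O(t/\sqrt{N})$, yielding the claimed binding error. The main obstacle is precisely the combining step: a naive analysis that multiplies per-copy SWAP-test acceptance probabilities would suggest an exponentially-decaying product in $t$, so the correct analysis must instead compute $\abs{\bra{\Psi_\lambda^{\otimes t}}(U^{\otimes t} \otimes I)\ket{M_\lambda^{\otimes t}}}$ as a single joint overlap, track how the shared $H$ controls the Schmidt structure of the transformed state inside the $M^t$-dimensional ambient space, and use concentration for the image size of a random function $[N] \to [M]$ with $N/M = 2^{-\lambda}$ to conclude. Carefully exploiting the sharing of $H$, rather than treating the $t$ SWAP tests as independent, is what prevents parallel repetition from amplifying binding.
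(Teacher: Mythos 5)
There is a genuine and fatal gap. Your attack fixes a classical function $H$ once and for all and then tries to rotate the first half of each $\ket{M_\lambda}$ toward half of $\ket{\Psi_\lambda}$ via the isometry $U\colon \ket x \mapsto \ket{H(x)}$. But $U\ket{M_\lambda} = 2^{-5\lambda/2}\sum_x \ket{H(x)}\ket{H(x)}$ has Schmidt rank at most $N = 2^{5\lambda}$ whereas $\ket{\Psi_\lambda}$ has Schmidt rank $M = 2^{6\lambda}$, so the overlap can be at most $\sqrt{N/M} = 2^{-\lambda/2}$ --- and in fact $\bra{\Psi_\lambda}(U\otimes I)\ket{M_\lambda} = 2^{-\lambda/2}$ exactly. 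More fundamentally, for \emph{any} classically fixed $H$ the reduced state of $\ket{M_\lambda}$ on the commitment register has fidelity at most $2^{-\lambda/2}$ with the maximally mixed state --- this is precisely \eqref{eq:binding-fidelity} from the binding proof of \Cref{thm:main} --- so by Uhlmann's theorem no operation on the committer's side of the cut can beat $2^{-\lambda/2}$, regardless of how clever the choice of $U$ is. Your claim that ``the action of $U^{\otimes t}$ is naturally coherent across copies\dots\ so the per-copy $\sqrt{N/M}$ overlap does not compound multiplicatively'' is simply false: $(U\ket{M_\lambda})^{\otimes t}$ is a tensor product, so the $t$ SWAP tests accept independently, each with probability $\tfrac{1+2^{-\lambda}}{2}$, giving $p_1 \approx 2^{-t}$ and a \emph{vanishing} sum-binding error rather than a near-maximal one. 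Concentration of the image size of $H$ cannot rescue this; the obstruction is the rank bound, which holds for every $H$.

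The key idea you are missing is that ``the committer chooses $H$ as a random function'' must be exploited \emph{coherently}: the committer should never collapse to a particular classical $H$, because the moment they do, the binding argument of \Cref{thm:main} kicks in and they are statistically bound. The paper instead has the committer lazily sample $H$ with Zhandry's compressed oracle. After the $t$ magic states are prepared by querying the compressed oracle, each commitment output register is (up to small error) \emph{maximally} entangled with the corresponding entry of the compressed database, because averaging over $H$ makes the commitment register exactly maximally mixed. The committer then measures the $x$-registers (all distinct except with probability $O(t^2/N)$), applies $\mathsf{StdDecomp}_{x_i}$, and forwards the matching database image register as the decommitment to~$1$; a gentle-measurement argument converts the abort-on-collision strategy into a genuine malicious committer at a further $O(t/\sqrt N)$ cost. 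This is what delivers $p_1 = 1 - O(t/\sqrt N)$. No strategy that first commits to a classical $H$ --- as yours does --- can reach it.
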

\begin{proof}[Proof sketch]
	We sketch how to efficiently break sum binding, even if the scheme is repeated $t$ times in parallel.
	Consider a binding adversary that commits to 0 honestly using the compressed oracle.
	Certainly by sending the decommitment registers honestly, the receiver would accept 0 with probability 1.
	We now show how to decommit to 1 with probability $1 - O(t/\sqrt N)$.
	First, we measure the $x$ for every magic state (including the ones held by the decommitter are also measured, which is okay since the receiver does not touch those registers when checking decommitment to 1 so they are essentially traced out) and abort if they are not all distinct, which happens with probability at most $O(t^2/N)$ by collision probability.
	Otherwise, every magic state holds a distinct $x$.
	For each fold, the decommitment register contains some $x_i$: we apply $\mathsf{StdDecomp}_{x_i}$ and search in the database where the entry $x_i$ occurs and send the corresponding image register as decommitment to 1, which is maximally entangled with the corresponding commitment register.
	To make this into a malicious committer that does not measure the $x$'s held by the receiver, we note that this measurement is only used to conditionally abort the committer, so the success probability of the same adversary except that it never aborts is still $1 - O(t/\sqrt N)$ by gentle measurement.
\end{proof}

\section{Eliminating trust with preprocessing}
\label{sec:trustless}

We first give a $\mathsf{PromiseQMA}$ upper bound on the complexity of verifying the computational insecurity of $H$ up to a constant multiplicative loss.
Therefore, we can check if a function $H$ is secure in doubly exponential time, or even exponential time if $\mathsf{BQP} = \mathsf{QMA}$.
Another consequence is that a $\mathsf{\Sigma_2P^{QMA}}$ machine can also find the lexicographically smallest $H$ that is secure.
Thus we can complete the preprocessing phase from the construction of \Cref{thm:main} in doubly-exponential time, or even exponential time if $\mathsf{BQP} = \mathsf{QMA}$.

\begin{proposition}
	\label{prop:qma-security}
	The language $L$, consisting of all functions $H$ such that the commitment constructed in \Cref{thm:main} using $H$ is insecure, is in $\mathsf{PromiseQMA}$.
\end{proposition}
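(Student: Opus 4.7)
The plan is to reduce the insecurity check to verifying the existence of a small hiding distinguisher, since the statistical binding bound \eqref{eq:binding-fidelity} already holds for every $H$ without qualification, so only hiding can fail. The two commitment-register reduced states are $\rho_0^H = 2^{-5\lambda}\sum_{x}\ketbra{H(x)}{H(x)}$ and $\rho_1 = 2^{-6\lambda} I$, so hiding security is precisely the question: can some $\mathrm{poly}(\lambda)$-size quantum circuit $A$ with $\mathrm{poly}(\lambda)$-qubit quantum advice $\ket{\psi}$ distinguish $H(x)$ for uniform $x$ from uniform $y \in \{0,1\}^{6\lambda}$? The underlying promise is the natural one: YES instances admit some $(A,\ket{\psi})$ with distinguishing advantage at least $\epsilon(\lambda)$, while NO instances admit none with advantage exceeding $\epsilon(\lambda)/c$ for a constant $c > 1$ (this is the ``constant multiplicative loss'' mentioned just before the proposition).

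The $\mathsf{PromiseQMA}$ verifier is then natural. Since $H$ is given as a truth table of length $\tilde\Theta(2^{5\lambda})$, a verifier polynomial in the input size has time $\mathrm{poly}(2^\lambda)$, which easily suffices to read any entry $H(x)$ and to simulate any $\mathrm{poly}(\lambda)$-size $A$. The prover sends as witness a classical description $\mathsf{desc}(A)$ encoded in the computational basis together with the advice $\ket{\psi}$. The verifier measures $\mathsf{desc}(A)$ to extract the circuit, flips a coin $b\in\{0,1\}$, prepares $z = H(x)$ for uniform $x\in\{0,1\}^{5\lambda}$ if $b=0$ and $z$ uniform in $\{0,1\}^{6\lambda}$ if $b=1$, simulates $A$ on $(z,\ket{\psi})$ to obtain a bit $o$, and accepts iff $o = b$. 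Using the sign-flip advice-bit trick invoked right after \Cref{thm:nonuniform}, the single-shot acceptance probability equals $\tfrac12 + \tfrac12\,\mathrm{adv}(A,\ket{\psi})$, so the YES/NO gap is $\tfrac{\epsilon}{2}$ vs.\ $\tfrac{\epsilon}{2c}$ above $\tfrac12$.

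The main obstacle is lifting this inverse-polynomial gap to the standard $(2/3,1/3)$ QMA gap in a way that is robust against entangled cheating witnesses --- if the verifier naively asked for many copies of $\ket{\psi}$, a malicious prover could send an arbitrary entangled state across those registers, and per-shot outcomes would not be independent, so a plain Chernoff bound would not apply. I would circumvent this by invoking Marriott--Watrous in-place amplification, which amplifies any inverse-polynomial gap to $(1-\mathrm{negl}, \mathrm{negl})$ using a single witness copy by repeatedly alternating the verifier's accept projector with the witness-preserving uncomputation. Soundness for a mixed or entangled witness then reduces cleanly: linearity of acceptance in the witness state lets us assume a pure witness, and measuring the classical description register reduces to a fixed circuit $A$ and a pure advice $\ket{\psi}$, at which point the NO promise directly caps the acceptance probability as computed above, placing $L$ in $\mathsf{PromiseQMA}$.
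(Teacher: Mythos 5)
Your proof is correct and follows essentially the same route as the paper's: observe that binding holds for every $H$ by \eqref{eq:binding-fidelity}, reduce the insecurity check to the existence of a small distinguisher for $H(x)$ versus uniform $y$, and build a $\mathsf{PromiseQMA}$ verifier that takes the (circuit, advice) pair as witness, flips a coin $b$, feeds either $H(x)$ or a uniform $y$ to the universal simulation, and accepts on a correct prediction, with an inverse-polynomial gap in $|H|$. The only difference is that you explicitly invoke Marriott--Watrous witness-preserving amplification to justify boosting the inverse-polynomial gap to the standard $(2/3,1/3)$ threshold without copying the (possibly entangled) quantum witness, whereas the paper leaves this standard fact implicit.
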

\begin{proof}
	We prove this by constructing a $\mathsf{PromiseQMA}$ verifier.
	The verifier, on input $H$ (of length $N\log M$) and a witness $\ket{C}$ (a distinguisher quantum circuit of length $S$ which is polynomial in $|H|$), samples a random bit $b$ and runs the universal quantum circuit on either $(\ket{C}, H(x))$ for a random $x \in [N]$ or $(\ket{C}, y)$ for a random $y \in [M]$, and accepts if the universal quantum circuit predicts $b$ correctly.
	We set completeness to be $\frac12 + 2^{-\lambda}$ and soundness to be $\frac12 + 2^{-\lambda - 1}$, so the gap is $2^{-\lambda - 1}$ which is inverse polynomial in $|H|$.

	If the output of $H$ is pseudorandom, then we have that $H \not\in L$ as desired.
	On the other hand, if there exists an $S$-sized witness for $H$ that distinguishes with advantage higher than $2 \cdot 2^{-\lambda}$, then $H \in L$.
\end{proof}

We now give an alternative \emph{unconditional} approach to eliminate trust on the magic state in uniform exponential time, but at the cost of introducing exponential communication between the parties in the preprocessing phase.%

\begin{theorem}
	There exists a computationally-hiding statistically-binding non-interactive quantum commitment scheme with an exponential communication preprocessing phase.
\end{theorem}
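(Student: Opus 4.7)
The plan is to design a two-message preprocessing protocol that jointly samples a function $H$ whose choice neither party can adversarially control in the ways that matter, after which both parties locally prepare $\ket{M_\lambda}$ and run the commit/reveal of \Cref{thm:main}. Concretely, during preprocessing the receiver first sends a uniformly random function $H_R : \{0,1\}^{5\lambda} \to \{0,1\}^{6\lambda}$ in its entirety (a classical string of $6\lambda \cdot 2^{5\lambda}$ bits, hence the exponential communication), the committer responds with a uniformly random $H_C$ of the same signature, both parties set $H := H_R \oplus H_C$, and each locally prepares the corresponding magic state $\ket{M_\lambda}$. The commit and reveal phases then proceed exactly as in \Cref{thm:main}.

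Statistical binding is inherited unconditionally. Inspecting \Cref{eq:binding-fidelity}, the Cauchy--Schwarz estimate $\left(\sum_y \sqrt{\Pr[H(x)=y]/2^{6\lambda}}\right)^2 \le |\mathrm{Image}(H)|/2^{6\lambda} \le 2^{-\lambda}$ uses only the signature of $H$, not any distribution over $H$. Hence binding of the single-fold scheme holds for \emph{every} function $H$, and in particular survives a malicious committer who is free to pick $H_C$ adversarially after seeing $H_R$; the $\lambda$-fold parallel repetition argument from \Cref{thm:main} then gives negligible sum-binding error.

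For computational hiding, the crucial point is that an honest committer samples $H_C$ uniformly \emph{after} $H_R$ is fixed, so $H = H_R \oplus H_C$ is uniformly distributed conditioned on any (even adversarially chosen) $H_R$, and hence uniformly random from the malicious receiver's vantage. By the Markov-inequality strengthening of \Cref{thm:nonuniform} discussed immediately after its statement, a uniformly random $H$ yields an ensemble that is inverse-exponentially indistinguishable from uniform against $2^\lambda$-size quantum distinguishers, except with inverse-exponential probability over $H$; hiding of a single fold follows, and the standard hybrid argument lifts this to the $\lambda$-fold parallel repetition.

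The main point I would be careful about is message ordering: a malicious receiver who saw $H_C$ first could set $H_R := H_C \oplus H^*$ to force $H = H^*$ to be an adversarially chosen non-pseudorandom function, breaking hiding. Hence the receiver must transmit $H_R$ first. No symmetric concern arises on the binding side, precisely because the fidelity bound is oblivious to the distribution of $H$, so granting the committer the last word in the preprocessing is harmless.
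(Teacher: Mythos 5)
Your proposal is correct, and the security argument is essentially the paper's: binding is distribution-free (holds for \emph{every} $H$ via Cauchy--Schwarz on \Cref{eq:binding-fidelity}), and hiding follows because the honest committer's contribution makes $H$ uniformly random from the receiver's view, at which point the Markov-style strengthening of \Cref{thm:nonuniform} applies. However, your construction is more elaborate than necessary. The paper does not run a two-message XOR coin flip; it simply has the committer alone sample $H$ uniformly and send its description to the receiver as the entire preprocessing. Your own closing observation — that the committer may safely be granted the last word because the fidelity bound is oblivious to $H$ — in fact implies that the receiver's first message $H_R$ is doing no work at all: an honest committer's $H_C$ alone already makes $H$ uniform for hiding, and a malicious committer's choice of $H$ is harmless for binding regardless of whether $H_R$ exists. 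So the asymmetry you identified (binding for all $H$; hiding only needs the receiver to see a uniform $H$) collapses the joint sampling into a single unilateral message, which is the paper's protocol. The one thing the paper emphasizes that your write-up glosses over is that after agreeing on $H$, each party must prepare its copies of $\ket{M_\lambda}$ \emph{on its own}; having one party send quantum magic states to the other would re-introduce the \Cref{prop:notcbinding} attack, so the preprocessing communication must remain classical.
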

\begin{proof}
	We again adapt the commitment from \Cref{thm:main} by adding a preprocessing phase to generate the magic states for both parties.
	More specifically, the sender samples a random function $H$ on their own, then send the classical description of $H$ to the receiver.
	Afterwards, they generate multiple copies of the magic state for that function on their own before the protocol begins.
	(It is important that they generate the quantum magic state on their own to not run into the impossibility of \Cref{prop:notcbinding} above.)

	Computational hiding is preserved since in this case $H$ is honestly generated, and the proof of \Cref{thm:main} actually suffices to show that the commitment is statistically binding for any $H$: specifically, the fidelity computed in \eqref{eq:binding-fidelity} is bounded by $2^{-\lambda}$ for any $H$.
\end{proof}

\ifanonymous\else
\section*{Acknowledgements}

The author thanks Ran Canetti, William Kretschmer, Qipeng Liu, Daniel Wichs, as well as Tomoyuki Morimae, Barak Nehoran, and Takashi Yamakawa~\cite{MNY23-qaicomm} for their invaluable feedback on an earlier draft of this work.
Special gratitude is owed to Qipeng Liu for proposing to augment \Cref{thm:main} with compressed oracles (\Cref{cor:ucrs-comm}), and Tomoyuki Morimae, Barak Nehoran, and Takashi Yamakawa~\cite{MNY23-qaicomm} for their helpful discussions contributing to the development of \Cref{sec:simulation}.
The author also thanks Yilei Chen and Peihan Miao for their helpful discussions.
The work is supported by DARPA under Agreement No.\ HR00112020023.
\fi

\ifshort\else
\printbibliography
\fi

\appendix

\ifshort\else

\fi

\section{Simulation security}
\label{sec:simulation}

Simulation security captures the security of a primitive using the real-ideal world paradigm more precisely than the game-based security definitions (which are usually used for hardness assumptions) and is the default security notion in the context of zero knowledge and secure multiparty computations.
In this appendix, we discuss how to further augment our commitment scheme following the template of Bartusek, Coladangelo, Khurana, and Ma~\cite{BCKM21-mpc} so that it satisfies simulation security.

The simulation security for a bit commitment is morally trying to capture the following ideal world: (1) in the commit phase, the committer sends a bit $b$ to the ideal functionality; (2) in the reveal phase, the committer asks the ideal functionality to open and the ideal functionality sends the bit $b$ to the receiver.
More specifically, the security against receiver is called equivocality, which states that the commitment can be simulated in a way that $b$ is only determined at the beginning of the reveal phase.
The security against committer is called extractability, which states that the commitment can be simulated in a way that $b$ can be extracted from the committer after the commit phase completes.

\paragraph{$\varepsilon$-simulation security for inefficient preprocessing.}
We first consider augmenting the base protocol (\Cref{thm:main}) with simulation security.
We note some caveats before proceeding.

The first caveat is that we allow the simulator to take a few copies of the auxiliary input state as additional inputs.
Intuitively, this means that a malicious party can come out of the protocol obtaining a few extra copies of the magic state.
Indeed, if we look at the construction of \Cref{thm:main}, the receiver after an honest interaction gains one copy of the magic state from the committer for each fold of repetition, and there does not seem to be a way for the committer to certifiably retrieve the state back.
We believe that this weakening is still meaningful and non-trivial since (1) the magic state is supposed to be public knowledge anyways (everyone should have many copies), and (2) the simulation security still guarantees that the ``real'' input is hidden from the other party.

The second caveat is that we only achieve $\varepsilon$-simulation security (with quantum auxiliary information $\ket{aux}$), which states that there is an efficient simulator $S$ and some polynomial $t$ such that for every adversary $A$ (that is possibly entangled with the distinguisher) and every $\varepsilon$, the view outputted by $(I \otimes S)(A, \ket{aux}^{\otimes t(1/\varepsilon)})$ is distinguishable from the real view except with advantage no more than $\varepsilon$, where $S$ does not touch the distinguisher's private register.
We stress that the protocol itself is independent of $\varepsilon$.
Furthermore, this still suffices for almost any game-based application that needs simulation security although with a larger polynomial security loss.
To see this, for example, suppose an adversary can break a downstream game-based security with some non-negligible probability $p$, then we can set our overall simulation error to be $p/2$ to reach a contradiction.
On the other hand, to prove a downstream $\varepsilon$-simulation security, we can similarly pick a smaller $\varepsilon$ for each fold and invoke hybrid argument.

The main technical ingredient we need is to implement reflection unitary for an arbitrary initial state, which in this case could contain some inefficient auxiliary information.
This is proven in the following lemma, which constructs such an algorithm by using a generalized SWAP test.

\begin{lemma}[Approximate state reflection]
	\label{lem:approxreflect}
	For any pure state $\ket\psi$, let $\mathcal R_\psi$ be the unitary channel for unitary $R_\psi := I - 2 \ketbra\psi$.
	Then there is a uniformly efficient channel $\tilde{\mathcal R}$ such that $\tilde{\mathcal R}(\cdot, \ketbra\psi^{\otimes n})$ is $\sqrt[4]{\frac{64}{n + 1}}$-close to $\mathcal R_\psi$ in diamond norm for all $\psi, n \ge 0$.
\end{lemma}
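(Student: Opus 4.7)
The plan is to set $\tilde{\mathcal R}$ to be the channel that applies the reflection $I - 2\Pi_{\mathrm{sym}}$ about the symmetric subspace of the input register together with the $n$ auxiliary registers, and then discards the auxiliary registers. Writing $\Pi_{\mathrm{sym}}$ for the projector onto the symmetric subspace of $n+1$ tensor factors at the local dimension of $\ket\psi$,
\begin{equation*}
\tilde{\mathcal R}(\rho_A, \sigma_B) := \mathrm{Tr}_B\mbracket{(I - 2\Pi_{\mathrm{sym}})(\rho_A \otimes \sigma_B)(I - 2\Pi_{\mathrm{sym}})}.
\end{equation*}
This is automatically CPTP since $I - 2\Pi_{\mathrm{sym}}$ is unitary, and crucially the definition does not depend on $\ket\psi$ (only on the number of auxiliary copies and the local dimension). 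For uniform efficiency, I would implement the reflection via a standard coherent permutation test: prepare an ancilla in the uniform superposition $\frac{1}{\sqrt{(n+1)!}}\sum_\sigma\ket\sigma$ over $S_{n+1}$, apply the controlled permutation $\sum_\sigma\ket\sigma\bra\sigma\otimes P_\sigma$ on the $n+1$ system registers, uncompute the ancilla into a single detection qubit flagging whether the registers ended up in the symmetric subspace, apply a $-1$ phase conditioned on that qubit, then reverse everything.

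For the approximation, by purification it suffices to bound the trace distance of the outputs on an arbitrary pure $\ket\phi_{AR}$ with reference register $R$. Decomposing
\begin{equation*}
\ket\phi_{AR} = \ket\psi_A \otimes \ket u_R + \ket{\psi^\perp}_A \otimes \ket v_R
\end{equation*}
for subnormalized $\ket u, \ket v$ with $\norm{\ket u}^2 + \norm{\ket v}^2 = 1$ and some unit $\ket{\psi^\perp} \perp \ket\psi$, I would compare the actual pre-trace output $(I - 2\Pi_{\mathrm{sym}})_{AB}(\ket\phi_{AR} \otimes \ket\psi^{\otimes n}_B)$ to the ideal $(R_\psi)_A \ket\phi_{AR}\otimes\ket\psi^{\otimes n}_B$ living in the same $ABR$ space. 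A direct calculation from $\Pi_{\mathrm{sym}} = \frac{1}{(n+1)!}\sum_\sigma P_\sigma$ gives $\norm{\Pi_{\mathrm{sym}}(\ket{\psi^\perp} \otimes \ket\psi^{\otimes n})}^2 = \frac{1}{n+1}$, since only the $n!$ permutations that fix position $1$ contribute nonzero cross terms. Consequently the two pre-trace states differ by the single error vector $-2\Pi_{\mathrm{sym}}(\ket{\psi^\perp}\otimes\ket\psi^{\otimes n})\otimes\ket v_R$, of $\ell^2$-norm at most $2/\sqrt{n+1}$. Fuchs--van de Graaf converts this into a trace-distance bound of $2/\sqrt{n+1}$ between the pre-trace pure states, which is preserved under the partial trace over $B$ and comfortably implies the claimed diamond-norm bound $\sqrt[4]{64/(n+1)}$: the RHS exceeds $2$ and is therefore trivial for $n \le 3$, and is dominated by the sharper $O(1/\sqrt{n+1})$ estimate for larger $n$.

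The main obstacle I expect is not the approximation estimate, which is essentially a one-line symmetric-group calculation, but rather pinning down a fully rigorous uniformly efficient implementation of $I - 2\Pi_{\mathrm{sym}}$ across both $n$ and the potentially huge ambient dimension of $\ket\psi$ (the main application uses $\ket\psi$ living in an exponentially large Hilbert space). The coherent permutation test sketched above is folklore but technically delicate to spell out uniformly in both parameters, so the cleanest write-up would most likely invoke a standard Schur-transform-based construction for this step rather than reprove it.
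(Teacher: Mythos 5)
Your proposal is correct in substance but takes a genuinely different route from the paper. The paper does \emph{not} reflect around the full symmetric subspace of the $n{+}1$ registers: instead it prepares a control register $\ket{+}_{\mathsf N} = \frac1{\sqrt{n+1}}\sum_{i=0}^n\ket i$, performs an index-controlled SWAP between $\mathsf X_0$ and $\mathsf X_i$, reflects around $\ket+$ on the control alone, and uncomputes the SWAP. This is a weaker (non-permutation-invariant) reflection that requires only $n$ controlled SWAPs and no Schur transform or coherent $S_{n+1}$-superposition, which sidesteps exactly the implementation worry you raise at the end of your write-up. The price is a worse constant: the paper obtains overlap $1 - 4/\sqrt{n+1}$ and hence trace distance $\le \sqrt[4]{64/(n+1)}$, whereas your permutation-test calculation gives the sharper $O(1/\sqrt{n+1})$. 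Both suffice for the stated bound, and your observation that the claimed $\sqrt[4]{64/(n+1)}$ is trivial for $n \le 3$ and dominated by $O(1/\sqrt{n+1})$ for larger $n$ is right, but be careful about the diamond-norm convention: if $\norm{\cdot}_\diamond$ is defined as the supremum of the \emph{trace norm} (rather than trace distance) of the output difference, your $2/\sqrt{n+1}$ becomes $4/\sqrt{n+1}$, which still implies the lemma for $n \ge 3$ and is trivial otherwise, so the conclusion stands either way.

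One technical inaccuracy to fix: the decomposition $\ket\phi_{AR} = \ket\psi_A\otimes\ket u_R + \ket{\psi^\perp}_A\otimes\ket v_R$ with a single unit $\ket{\psi^\perp}$ does not hold for a general purification --- the component orthogonal to $\ket\psi$ on $A$ can itself be entangled with $R$ with Schmidt rank larger than one. The fix is routine and does not change the bound: write the orthogonal component as $\ket w_{AR}$ with $(\bra\psi_A\otimes I_R)\ket w = 0$, Schmidt-decompose $\ket w = \sum_j\lambda_j\ket{a_j}_A\ket{b_j}_R$ with each $\ket{a_j}\perp\ket\psi$, and observe that the vectors $\Pi_{\mathrm{sym}}(\ket{a_j}\otimes\ket\psi^{\otimes n})$ are mutually orthogonal with common squared norm $\frac1{n+1}$, so $\norm{\Pi_{\mathrm{sym}}(\ket w\otimes\ket\psi^{\otimes n})}^2 = \frac{\norm{\ket w}^2}{n+1}$. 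The paper handles the analogous step explicitly by invoking linearity over the Schmidt decomposition; your argument should do the same.
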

\begin{proof}
	We describe the algorithm as follows.
	We denote the registers as $\mathsf X_0, ..., \mathsf X_n$ with $\mathsf X_0$ being the input register and the rest being initialized to $\ket\psi$.
	\begin{enumerate}
		\item Initialize a uniform superposition $\ket{+}_{\mathsf N} := \frac1{\sqrt{n + 1}} \sum_{i = 0}^{n} \ket i_{\mathsf N}$.
		\item Controlled on $\mathsf N$ being $\ket i$, swap $\mathsf X_0$ and $\mathsf X_i$.
		\item Controlled on $\mathsf N$ being $\ket+$, apply phase $-1$.
		\item Uncompute step 2.
		\item Trace out everything except $\mathsf X_0$.
	\end{enumerate}
	We begin analyzing the algorithm by considering pure state inputs.
	If the input is $\ket\psi$ then phase $-1$ is correctly applied since steps 2 and 4 do not affect the state.
	If the input is some orthogonal state $\ket\phi$, let $\ket{\phi_i}$ denote the state where $\mathsf X_i$ is $\ket\phi$ and everywhere else is $\ket\psi$.
	Then after step 3, we get the state
	\[ \frac1{\sqrt{n + 1}}\sum_{i = 0}^n \ket{\phi_i}(\ket i - 2\ket+) = \frac1{\sqrt{n + 1}}\sum_i \ket{\phi_i}\ket i - \frac2{n + 1}\sum_{i, j} \ket{\phi_i}\ket j. \]
	Therefore after step 4, we get
	\[ \ket{\tilde\phi} := \paren{1 - \frac2{\sqrt{n + 1}}} \ket{\phi_0}\ket+ - \frac2{n + 1}\sum_{i \neq j}\ket{\phi_i}\ket j. \]
	We compare this state with the expected output and get that the overlap
	\[ (\bra{\phi_0}\bra+)\ket{\tilde\phi} = 1 - \frac2{\sqrt{n + 1}}\paren{1 + \frac{n}{n + 1}} \ge 1 - \frac4{\sqrt{n + 1}}. \]
	By decomposing a general pure state $\ket x = \sqrt p \ket{\psi} + \sqrt{1 - p} \ket{\phi}$ and let $\ket{\psi_n} := \ket\psi^{\otimes n} \ket+$, we have that
	\begin{equation}
		\label{eq:approxreflection-ip}
		(\bra x R_\psi^\dagger \otimes \bra{\psi_n})\ket{\tilde x} \ge 1 - \frac4{\sqrt{n + 1}}
	\end{equation}
	as well by combining the two cases above.
	Furthermore, \eqref{eq:approxreflection-ip} generalizes to a larger entangled pure state by simply applying \eqref{eq:approxreflection-ip} linearly to the Schmidt decomposition.
	For diamond norm, it suffices to consider any input state where the overall entangled state is pure (since we can without loss of generality purify the state for the distinguisher), thus we have that the trace distance between $(I \otimes R_\psi) \ket x \otimes \ket{\psi_n}$ and $\ket{\tilde x}$ is at most
	\[ \sqrt{1 - \paren{1 - \frac4{\sqrt{n + 1}}}^2} \le \sqrt[4]{\frac{64}{n + 1}}. \]
	This completes the proof since trace distance cannot increase after the operation of tracing out the auxiliary registers holding $\ket{\psi_n}$, which is CPTP.
\end{proof}

\begin{theorem}
	$\varepsilon$-simulation secure commitment schemes with quantum auxiliary input exist.
	Furthermore, it can be built from any non-interactive extractable-binding computationally-hiding commitment scheme with quantum auxiliary input.
\end{theorem}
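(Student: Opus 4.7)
The plan is to apply the compiler of Bartusek, Coladangelo, Khurana, and Ma~\cite{BCKM21-mpc} to the non-interactive extractable-binding computationally-hiding commitment scheme provided as a hypothesis (and which exists unconditionally by \Cref{thm:main} combined with the binding-equivalence theorem). That compiler lifts any such base commitment into one satisfying both equivocation and extraction in the simulation-based sense, and so — modulo one concrete obstruction — would immediately give the theorem. The obstruction, as flagged in the concurrent-work discussion, is that the BCKM21 simulator invokes Watrous-style quantum rewinding, and the simulator therefore needs to implement the reflection $R_\psi = I - 2\ketbra\psi$ around the honest party's initial state $\ket\psi$. In the standard (trustless) setting, $\ket\psi$ is an efficiently preparable state such as $\ket0^{\otimes \mathrm{poly}(\lambda)}$, so $R_\psi$ is directly implementable; in our auxiliary-input setting, however, $\ket\psi$ contains the (possibly inefficient) quantum auxiliary input $\ket{aux}$, and there is no efficient circuit for $R_\psi$ in general.

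To sidestep this, I would hand the simulator $t := t(1/\varepsilon)$ additional copies of $\ket{aux}$ as resource state and replace every exact reflection call in the BCKM21 rewinding procedure by the approximate-reflection channel $\tilde{\mathcal R}$ from \Cref{lem:approxreflect}, which is $\sqrt[4]{64/(t+1)}$-close in diamond norm to $\mathcal R_\psi$ when supplied with $t$ copies of $\ket\psi$. Since Watrous rewinding makes only $\mathrm{poly}(\lambda)$ reflection calls and diamond-norm error composes additively under sequential composition, choosing $t = \mathrm{poly}(\lambda)/\varepsilon^4$ suffices to keep the total deviation between the approximate simulator and the exact one below $\varepsilon$. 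The extraction analysis from BCKM21 then carries over because the underlying extractable binding of the base commitment (guaranteed by hypothesis) is preserved under the compilation, and the equivocation analysis carries over because the computational hiding of the base commitment is used only against polynomial-time distinguishers.

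After this, the plan is to put the pieces together: the real/ideal distance in the definition of $\varepsilon$-simulation is bounded by the sum of the exact-BCKM21 simulation error (negligible in $\lambda$) and the accumulated reflection error (at most $\varepsilon$ by choice of $t$), yielding the stated $\varepsilon$-simulation guarantee against any polynomial-time, possibly entangled, adversary. Since the protocol itself never mentions $\varepsilon$ or $t$ — only the simulator does — this matches the form of the definition laid out earlier in the section, including the caveat that the simulator may consume extra copies of the auxiliary input.

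The step I expect to be the main obstacle is verifying that the Watrous-type rewinding analysis in BCKM21 degrades gracefully when each reflection is replaced by an approximate one, rather than collapsing because the rewinding loop amplifies small per-step errors unfavorably. I would handle this by opening up the relevant oblivious amplitude-amplification / state-generation subroutine, tracking the error contributed by each approximate reflection in the trace-norm bound on the simulator's output, and checking that the standard near-unitary robustness of these procedures suffices — so that an $O(\varepsilon/\mathrm{poly}(\lambda))$ error per reflection yields an $O(\varepsilon)$ error overall, keeping $t(1/\varepsilon)$ polynomial in $1/\varepsilon$ as required.
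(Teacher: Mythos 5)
Your proposal is correct and takes essentially the same approach as the paper: apply the BCKM21 compiler, identify the Watrous-rewinding reflection as the obstruction, replace it with the approximate reflection of \Cref{lem:approxreflect} applied to $\ket0 \otimes \ket{aux}^{\otimes t}$, and bound the total error by a hybrid argument splitting the negligible exact-simulator error from the accumulated per-reflection diamond-norm error, with $n = \Theta((\lambda/\varepsilon)^4)$ copies. The paper's proof is exactly this, including the final remark that the extractor side follows BCKM21 with a polynomially smaller error parameter as you note.
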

\begin{proof}
	This follows the same construction and proof strategy as \cite{BCKM21-mpc,AQY22-prs} except for one change.
	In particular, Watrous rewinding \cite{Wat06-ZK} was used for a total of $\lambda$ times in the equivocal simulator \cite[Section 4.1]{BCKM21-mpc}, which involved applying a unitary $I - 2\ketbra0$ on a certain private register of the simulator's.
	The purpose of this unitary was to check whether this register returned to all zero state.
	In our context, this register would be initialized to $\ket0 \otimes \ket{aux}^{\otimes t}$ instead for a suitably large zero register and some polynomial $t(\lambda)$, and similarly we need to reflect around this state in order for the analysis to go through\footnote{One na\"ive idea to fix this is to ask the simulator to only reflect the zero part and ignore the auxiliary information. However, this does not work since we can calculate and see that such a rewinding algorithm (without any further changes) would not work for an adversary that picks its challenge by measuring the auxiliary information it receives from the simulator.}.

	We establish $\varepsilon$-equivocality as follows.
	We first consider a simulator that runs the \cite{BCKM21-mpc} equivocal simulator with access to an inefficient reflection oracle: this gives a negligible simulation error, and thus it is at most $\varepsilon/2$ for all sufficiently large $\lambda$.
	We now instantiate this inefficient oracle with \Cref{lem:approxreflect} where $n = \left\lceil 1024(\lambda/\varepsilon)^4 \right\rceil - 1$ and $\ket\psi = \ket0 \otimes \ket{aux}^{\otimes t}$, then by a standard hybrid argument we arrive that the overall simulation error is at most $\varepsilon$.
	Furthermore, the number of copies of $\ket{aux}$ used is $O(t(\lambda/\varepsilon)^4)$ which is polynomial.

	The rest of the proof follows as \cite{BCKM21-mpc} using a similar trick of running the inner simulator with a polynomially smaller error parameter.
\end{proof}

\paragraph{Simulation security in the unclonable common random state model.}
We conclude by remarking on the simulation security of the commitment scheme with trusted setup from \Cref{cor:ucrs-comm}.
First of all, we can still apply the \cite{BCKM21-mpc} transformation, but in this case we would get negligible simulation security since with access to the compressed database register, the simulator can efficiently test/uncompute the magic state.

However, we note that the argument of \Cref{prop:notcbinding} gives an adversary that breaks honest binding when the adversary could control the sampling of the random function using compressed oracles, and thus an analogous argument can show that the commitment of \Cref{cor:ucrs-comm} is in fact already negligibly equivocal even without any further modification to the scheme.

Using similar ideas this commitment is probably also negligibly extractable as well, however, the argument is more complicated since in this case we need to be able to extract any malicious committer (this is unlike the equivocal case where the only freedom a passive but malicious receiver has is to distinguish the views, so it suffices to simply show simulation correctness).
We leave formalizing this to future work.

\section{Post-quantum pseudorandomness}
\label{sec:nonuniform-proof}

We view a quantum query-less circuit with auxiliary input of total size $S$ as an $S$-qubit input fed to a universal quantum circuit, which itself is independent of the random function $H$.
(Indeed we can without loss of generality even take $S$ to be the number of qubits that actually depend on the function $H$.)

The first polynomial upper bound for this problem was established by Chung, Guo, Liu, and Qian~\cite{CGLQ20-tradeoffs} and was subsequently improved by Liu~\cite{Liu23-advice}.
We follow the second work in this proof.
We first recall a game $G$ in a $P$-BF-QROM \cite{Liu23-advice} to be the following:
\begin{enumerate}
	\item[0.] A random function $H: [N] \to [M]$ is sampled uniformly at random.
	\item The adversary starts by making $P$ (quantum) queries to $H$, and then we postselect on measuring its first qubit and obtaining 1 (abort if it is not possible).
		This postselection may affect its residual state as well as the conditional distribution of the random function.
	\item The challenger then samples a random classical challenge to the adversary, using $T_{samp}$ queries.
	\item The adversary produces a response, using $T$ queries.
	\item The challenger outputs a bit indicating accept or reject, using $T_{verify}$ queries.
\end{enumerate}
We say $G$ is $\nu(P, T)$-secure in the $P$-BF-QROM if any adversary with $T$ queries cannot make the challenger accept with probability higher than $\nu$.
In our case, the security game of a pseudorandom ensemble (or PRG) against a query-less adversary corresponds to $T_{samp} = 1$ and $T = T_{verify} = 0$: the challenger flips a random bit and either sends a pseudorandom $H(x)$ (using a single query) or a random $y$, and asks the adversary to predict the bit.

Similarly, an $(S, T)$ non-uniform quantum adversary plays the same security game, except that in step 1 it can do an arbitrary amount of queries but is not allowed to do post-selection, and its output (to be used later in step 3) is restricted to at most $S$ qubits.

\begin{lemma}[{\cite[Lemma 4]{Liu23-advice}, with the coefficient from \cite[Proof of Lemma 5.13]{CGLQ20-arxiv}}]
	\label{lemma:prg-bf}
	The PRG game has $\nu(P, T) = \frac12 + 4\sqrt2 \cdot \sqrt{\frac{P + T^2}N}$ in the $P$-BF-QROM.
\end{lemma}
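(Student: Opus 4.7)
The plan is to replay the compressed-oracle based bit-fixing analysis of \cite{CGLQ20-tradeoffs,Liu23-advice}, specialized to the PRG distinguishing game, and to harvest the improved constant from \cite[Proof of Lemma 5.13]{CGLQ20-arxiv} at the final Cauchy--Schwarz step. The conceptual target is: the adversary can only win noticeably above $\frac{1}{2}$ if the challenge point $x$ collides with a position it has already ``learned'' about via its preprocessing or online queries, and the database that tracks this learning grows by at most one entry per query.

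First I would replace the true random function $H$ by Zhandry's compressed standard oracle $\mathsf{CStO}$, which perfectly simulates $H$ while exposing a quantum database register $D$ whose support grows by at most one entry per query. The bit-fixing preprocessing (step 1 of the $P$-BF-QROM game) leaves a subnormalized joint adversary/database state supported on databases of size at most $P$. In the PRG game, the challenger's single query on the $b=0$ branch appends at most one more entry to $D$, and the adversary's $T$ online queries each add at most one more; so on both branches we are always working with a database whose size is bounded by $P + T + 1$.

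Next I would compare the two reduced adversary states at the end of the protocol. The key estimate, which is the heart of the compressed oracle analysis, is that the only way the $b=0$ branch (challenge $= H(x)$) can deviate from the $b=1$ branch (challenge uniform) is through the component in which $x$ already appears in $D$, and the amplitude of such a collision at any point is bounded in terms of $|D|/N$ via the operator-norm bound on $\mathsf{StdDecomp}$ restricted off the support of the current database. A hybrid argument over the $T$ online queries, plus the one challenger query, bounds the accumulated squared-fidelity loss by $O((P + T^2)/N)$: the $P$ contribution from the postselected preprocessing state, and the $T^2$ contribution from summing $t$ over the $t$-th online query as in the standard BBBV-style amortization.

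Finally I would turn the fidelity lower bound $1 - O((P+T^2)/N)$ into a trace-distance upper bound using Fuchs--van de Graaf, and hence into a distinguishing advantage; tracking constants carefully across Cauchy--Schwarz and the $\sqrt{2\epsilon}$ conversion yields the stated coefficient $4\sqrt{2}$. The main obstacle I anticipate is the interaction between the bit-fixing postselection in step 1 and the compressed database analysis: postselection can renormalize and skew the conditional distribution of $D$, so I would need to argue, following \cite{Liu23-advice}, that after postselection the database is still supported on at most $P$ entries and that the compressed-oracle amplitude bounds still apply in the postselected subnormalized state without blowing up the constant.
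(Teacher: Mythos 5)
This lemma is imported verbatim from \cite{Liu23-advice} (their Lemma~4, with the explicit constant lifted from the proof of Lemma~5.13 in \cite{CGLQ20-arxiv}); the paper contains no internal proof of it, so there is nothing in the paper itself to compare against. Judged on its own merits, your sketch is conceptually aligned with how those cited works argue: bit-fixing preprocessing is modeled by postselecting a compressed-oracle state whose database has support on at most $P$ entries, the online phase can only distinguish by steering amplitude onto databases containing a preimage of the challenge, and a fidelity-to-trace-distance conversion yields a square-root advantage bound. Acknowledging the postselection/renormalization issue as the main obstacle is also the right instinct.

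There are, however, two places where the sketch as written would not close. First, the way you account for the $T^2$ term is off: you say it comes from ``summing $t$ over the $t$-th online query,'' which if taken literally at the probability level (summing something like $(P+t)/N$ over $t$) produces a spurious $PT/N$ cross term that does not appear in the statement. The correct accounting works at the amplitude level: each online query moves at most $O(1/\sqrt{N})$ amplitude onto the bad subspace \emph{independently of the current database size}, so after $T$ queries the bad amplitude is at most $\sqrt{P/N}+O(T/\sqrt{N})$, and squaring (with AM--GM absorbing the cross term $T\sqrt{P}/N \le \tfrac12(T^2+P)/N$) yields $O((P+T^2)/N)$. Second, and relatedly, the explicit constant $4\sqrt{2}$ is doing real work in the paper --- it propagates into the $12\sqrt[3]{S/N}$ of Proposition~\ref{thm:nonuniform} via the $\gamma$-optimization in Theorem~\ref{thm:bf2qai} --- so ``tracking constants carefully'' cannot be left as a black box; one must verify that the per-query amplitude bound, the AM--GM absorption, and the Fuchs--van de Graaf step compose to exactly this coefficient. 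As a reconstruction attempt your outline identifies the right ingredients, but it is not yet at the level of a proof, which is unsurprising given that the paper delegates the proof entirely to \cite{Liu23-advice,CGLQ20-arxiv}.
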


\begin{theorem}[{\cite[Theorem 5]{Liu23-advice}}]
	\label{thm:bf2qai}
	Any game $G$ that has security $\nu$ in the $P$-BF-QROM has security
	\[ \delta(S, T) \le \min_{\gamma > 0}\mbrace{\nu(P/\gamma, T) + \gamma} \]
	against $(S, T)$ non-uniform adversaries in QROM, where $P = S(T + T_{verify} + T_{samp})$.
\end{theorem}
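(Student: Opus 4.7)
The plan is to reduce any $(S,T)$ non-uniform QROM adversary to an adversary in the $(P/\gamma)$-BF-QROM, losing only $\gamma$ in the winning probability. The key tool will be a \emph{quantum presampling theorem}: for any random $H:[N]\to[M]$ and any $S$-qubit advice state $\rho_H$ possibly depending on $H$, the joint object $(H,\rho_H)$, as seen by any algorithm making at most $Q := T + T_{samp} + T_{verify}$ quantum queries to $H$ during the game, is $\gamma$-close in total variation to a convex combination of distributions in which $P/\gamma = SQ/\gamma$ input-output pairs of $H$ are pre-specified (bit-fixed) and all remaining values are uniform and independent. This is the quantum analogue of the classical Coretti--Dodis--Guo--Steinberger presampling step.

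Given such a presampling theorem, the reduction to BF-QROM is routine. I would build a BF-QROM adversary $B$ that in its bit-fixing phase samples the pre-fixed oracle pairs together with the corresponding simulated advice $\tilde{\rho}$ using at most $P/\gamma$ bit-fixing queries; then in the QROM phase $B$ simply runs the original non-uniform adversary against the honest challenger using $T$ queries, responding with the bit-fixed value wherever defined and with fresh randomness otherwise. By the presampling theorem, the view of the entire game under $B$ differs in total variation from the true QROM view by at most $\gamma$, so $B$ wins the BF-QROM game with probability at least $\delta(S,T) - \gamma$. Rearranging gives $\delta(S,T) \le \nu(P/\gamma, T) + \gamma$ for every $\gamma > 0$, and taking the minimum over $\gamma$ yields the statement.

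The main obstacle is proving the presampling theorem itself. Classically, a counting argument on advice bits suffices, but quantum advice may be entangled with $H$ in superposition, so a naive union bound over advice states would blow up doubly exponentially. I would use Zhandry's compressed-oracle framework to make explicit the entanglement between $H$ and the advice register, then identify a ``heavy'' set of at most $SQ/\gamma$ database entries that any $(S,T)$ adversary could plausibly touch during the $Q$ queries of the protocol; the $Q$ factor arises because each single qubit of advice can propagate into $O(Q)$ distinct query positions across the protocol, and $\gamma$ quantifies the probability of ever hitting outside this heavy set via a Markov-type tail bound on the compressed database's size. The delicate step is to show that freezing only these $SQ/\gamma$ entries to fixed values (rather than resampling them on the fly) perturbs the protocol execution by at most $\gamma$ in diamond distance --- getting the constants and polylogarithmic factors right here is what enables the clean bound $P = S(T + T_{verify} + T_{samp})$ without any extra overhead.
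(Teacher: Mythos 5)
The paper only cites this theorem from \cite{Liu23-advice} and does not reprove it, so the comparison is against the known proof there. Your proposal goes a genuinely different route, and unfortunately it has a gap at the very step you flag as the ``main obstacle.'' The key issue is that the $P$\nobreakdash-BF\nobreakdash-QROM as defined in this paper (and in \cite{Liu23-advice}) is \emph{not} the classical bit-fixing model you target in your reduction: as step~1 of the definition makes clear, the BF phase consists of $P$ quantum queries to the \emph{real} oracle followed by \emph{post-selection} on a single-qubit measurement, not fixing $P$ input-output pairs and resampling the rest uniformly. These two models are not obviously equivalent, and the reason \cite{CGLQ20-tradeoffs,Liu23-advice} introduce the post-selection variant at all is precisely that the classical Coretti--Dodis--Guo--Steinberger presampling theorem, which you invoke in the form ``$(H,\rho_H)$ is $\gamma$-close in total variation to a convex combination of distributions with $SQ/\gamma$ fixed coordinates,'' is \emph{not known to hold} for quantum advice with these parameters. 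Proving such a quantum presampling theorem would be an independent result, and your sketch of it does not close the gap: the advice $\rho_H$ is allowed to be an arbitrary inefficient function of the entire truth table of $H$, so it is not the output of any query algorithm, and Zhandry's compressed oracle --- which tracks the positions actually queried --- gives you no handle on which ``database entries'' the advice correlates with. There is no ``heavy set'' of $SQ/\gamma$ entries to freeze, because a few qubits of advice can be coherently entangled with exponentially many oracle positions.

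The actual argument in \cite{Liu23-advice} avoids presampling entirely. It replaces the $S$-qubit advice by a ``guess-and-verify'' procedure carried out inside the BF phase: the BF-QROM adversary prepares a fresh (essentially maximally mixed) candidate advice register, repeatedly plays the \emph{same game} against a freshly sampled challenger roughly $S/\gamma$ times using that candidate, and post-selects on a success pattern consistent with the real adversary's advantage. Each repetition costs $Q = T + T_{samp} + T_{verify}$ oracle queries --- which is exactly where the factor $Q$ in $P = SQ$ comes from --- and after post-selection the candidate advice behaves, up to error $\gamma$, like the real advice when the online game is finally played. This ``multi-instance plus post-selection'' reduction is the substantive content of Theorem~5 and is structurally very different from a presampling-then-simulate argument, so even if a quantum presampling theorem were somehow established, you would have a different proof, not a reconstruction of the cited one.
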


\begin{proof}[Proof of \Cref{thm:nonuniform}]
	Combining \Cref{lemma:prg-bf} and \Cref{thm:bf2qai}, we find that for any non-uniform algorithm $A$ of size $S$ (that potentially depends on $H$),
	\begin{align*}
		\abs{\E_H\mbracket{\E_x[A(H(x))] - \E_y[A(y)]}}
			&= 2\abs{\delta(S \cdot 1, 0) - \frac12} \\
			&\le 2\min_{\gamma > 0}\mbrace{\sqrt{2^5 \cdot \frac{S}{\gamma N}} + \gamma} \\
			&= 12 \cdot \sqrt[3]{\frac{S}{N}},
	\end{align*}
	showing the bound above.
\end{proof}

\begin{proof}[Proof of \Cref{prop:stathiding}]
	Since each copy of the magic state can be efficiently prepared through a single quantum query to $H$, any distinguishing adversary is a valid query-less adversary in the $P$-BF-QROM, and thus we arrive at the proposition by invoking \Cref{lemma:prg-bf}.
\end{proof}

\ifshort
\printbibliography
\fi

\end{document}